\documentclass[a4paper,11pt]{article}
\usepackage{cite}
\usepackage{hyperref}
\usepackage[top=1 in,bottom=1.0 in,left=1.0 in,right=1.0 in]{geometry}
\hypersetup{hidelinks}
\usepackage{amsthm}
\usepackage{amssymb,amsmath}

\numberwithin{equation}{section}

\newtheorem{proposition}{Proposition}

\newtheorem{theorem}{Theorem}
\newtheorem{remark}{Remark}

\newcommand{\bs}{\boldsymbol}
\newcommand{\mb}{\mathbb}
\newcommand{\mf}{\mathbf}
\newcommand{\mr}{\mathrm}
\newcommand{\mk}{\mathcal}

\newcommand{\bsb}{\begin{subequations}}
\newcommand{\esb}{\end{subequations}}

\newcommand{\ti}[1]{\tilde{#1}}

\usepackage{amsmath}
\usepackage{amsfonts}
\usepackage{latexsym}
\usepackage{amssymb}
\usepackage{appendix}
\usepackage{mathrsfs}
\usepackage{graphicx}
\usepackage{bm}
\usepackage{ulem}
\usepackage{arydshln}
\usepackage{tikz}
\usepackage{multirow}
\usepackage{rotating}

\usepackage{color}
\newcommand{\bred}{\begin{color}{red}}
\newcommand{\ecl}{\end{color}}
\newcommand{\bblue}{\begin{color}{blue}}
\newcommand{\bgre}{\begin{color}{green}}
\newcommand{\bora}{\begin{color}{orange}}
			
\begin{document}
				
\title{Direct linearization of the SU(2) anti-self-dual Yang-Mills equation in various spaces}
\author{Shangshuai Li$^{1,2,3}$\footnote{Email: lishangshuai@shu.edu.cn},
~~ Da-jun Zhang$^{1,2}$\footnote{Email: djzhang@staff.shu.edu.cn} \\
{\small  $^{1}$Department of Mathematics, Shanghai University, Shanghai 200444, P.R. China}\\
{\small $^{2}$Newtouch Center for Mathematics of Shanghai University,  Shanghai 200444, P.R. China}\\
{\small $^{3}$Department of Applied Mathematics, Faculty of Science and Engineering, Waseda University,}\\
{\small Tokyo 169-8555, Japan}
}

\maketitle
				
\begin{abstract}
The paper establishes a direct linearization scheme for the  SU(2) anti-self-dual Yang-Mills (ASDYM) equation.
The scheme starts from a set of  linear integral equations with general measures and plane wave factors.
After introducing infinite-dimensional matrices as master functions,
we are able to investigate evolution relations and recurrence relations of these functions,
which lead us to the unreduced ASDYM equation.
It is then reduced to the ASDYM equation in the  Euclidean space and two ultrahyperbolic
spaces by reductions to meet the reality conditions and gauge conditions, respectively.
Special solutions can be obtained by choosing suitable measures.

\begin{description}
\item[Keywords:] anti-self-dual Yang-Mills equation, direct linearization, linear integral equation, integrable system

\end{description}

\end{abstract}

\section{Introduction}\label{sec-1}

Yang-Mills theory provides one of the most beautiful models in quantum physics,
which describes the interactions between elementary particles.
Nowadays, Yang-Mills theory plays an important role in the field of nonlinear science, twistor theory,
duality structure, mathematical physics and so on.
An attractive feature among these researches is the self-dual condition in Yang-Mills field (gauge field),
which leads to an integrable system \cite{Jimbo-1982,Ward-1984},
namely, the self-dual Yang-Mills (SDYM) equation.

In the pioneering work on constructing solutions to the SDYM equation,
instantons were proposed as a special action that is globally finite.
Instantons can be constructed through the Atiyah-Hitchin-Drinfeld-Manin construction \cite{Atiyah-1978}
and the Corrigan-Fairlie-'t Hooft-Wilczek ansatz \cite{Corrigan-1977,Wilczek-1977}.
The original self-dual condition is difficult to treat directly, yet Yang's work \cite{Yang-1977}
proposed a coordinate transformation that allows the self-dual condition to be represented by a simpler form,
leading to a matrix equation \cite{BFNY-1978,Po-1980} (known as Yang's $J$-formulation):
\begin{equation}\label{Yang-SDYM}
	(J_yJ^{-1})_{\tilde y}+(J_zJ^{-1})_{\tilde z}=0,
\end{equation}
where $y,z$ are independent coordinates and $\tilde y=y^*, \tilde z=z^*$,
i.e.  the complex conjugates of $y$ and $z$, respectively.
For a given gauge group, such as GL$(N)$, SL$(N)$, U$(N)$ and SU$(N)$,
the gauge potentials take values from the Lie algebra of the gauge group.
This leads that the solutions of \eqref{Yang-SDYM} will satisfy a certain condition,
which we call the gauge condition for convenience (see section \ref{sec-2-2} for detailed interpretation).

In fact, most of discussions about the SDYM equation as well as Yang's $J$-formulation \eqref{Yang-SDYM}
focus on describing the motion in Euclidean space $\mb E$.
There are also several researches on other spaces,
such as the Minkowski space $\mb M$ \cite{deVega-1988,Getmanov-1990,Chau-1993,Chau-1994}
and the ultrahyperbolic space $\mb U$ \cite{Chau-1993,Chau-1994,Mason-2005}.
Recently, a further progress has been made on constructing soliton solutions of the SDYM equation in various spaces
by means of quasideterminant \cite{Hamanaka-2020,Hamanaka-2022,Huang-2021}.
A more recent work showed the application of SDYM equation on 4D Wess-Zumino-Witten model,
which describes the open $N=2$ string theory \cite{Hamanaka-2023}.

As an integrable system, the SDYM equation has been solved via many integrable methods
\cite{Hamanaka-2020,Hamanaka-2022,Belavin-1978,Ueno-1982,Takasaki-1984,Sasa-1998,Nimmo-2000}.
Our recent work employed a direct approach to derive the explicit solutions of
the SU(2) SDYM equation \cite{LQYZ-SAPM-2022}, which reveals the connection between the SDYM equation and
the Ablowitz-Kaup-Newell-Segur (AKNS) system from the perspective of the Cauchy matrix structure \cite{Zhao-2018}.
This approach was then extended  to deal with the SU($N$) case \cite{LSS-2023},
where a new link between the SDYM equation and the matrix Kadomtsev-Petviashvili (KP) hierarchy was revealed.

In this paper, we aim to build up the direct linearization scheme
the anti-self-dual Yang-Mills (ASDYM) equation.
Note that the SDYM equation and ASDYM equation are mathematically equivalent.
The direct linearization (DL) approach is an effective tool for studying integrable equations and
understanding the relationships between them by starting from different types of linear integral equations.
It was first proposed by Fokas and Ablowitz to examine the Korteweg-de Vries (KdV) equation \cite{Fokas-1981}.
Later, this approach was employed to construct solutions to the nonlinear Schr\"{o}dinger equation \cite{Nijhoff-1983}
and the derivative nonlinear Schr\"{o}dinger equation \cite{Nijhoff-1983-2}, etc.
Soon after, the approach was developed by Dutch group
to be an effective tool for investigating discrete integrable systems \cite{83-NQC,QNCL-84,85-NCW}.
Recent progresses about the DL approach include
the application to the extended discrete Boussinesq equations \cite{ZZN-SAM-2012}
and discrete KP type equations \cite{FW-2017,FW-thesis,FW-2020,FW-Nijhoff-2022}, and
the establishment of an elliptic version for elliptic solitons of discrete integrable systems \cite{NSZ-2023}.
Based on the link between the SDYM equation and the AKNS system that we found in \cite{LQYZ-SAPM-2022},
in this paper, we will establish the DL scheme for the negative and positive AKNS hietarchy
from which we are able to construct solutions of the ASDYM equation.

The paper is organized as follows.
We begin by introducing the formulation of the ASDYM equation and  gauge conditions in section \ref{sec-2}.
In section \ref{sec-3}, we establish the DL scheme to derive the unreduced ASDYM equation.
We introduce a set of linear integral equations,
define some infinite-dimensional matrices
and derive their evolution relations and ``difference'' relations.
Then we can extract relations for some  $2\times2$ matrices,
which lead us to the unreduced ASDYM equation.
Then in section \ref{sec-4}, we implement reductions to meet the reality conditions and gauge conditions
to obtain the SU(2) ASDYM equation in different spaces.
Finally, concluding remarks are given in section \ref{sec-5}.

\section{Formulation of the ASDYM equation}\label{sec-2}

In this section, we recall the $J$-formulation and gauge conditions of the ASDYM equation\footnote{
The ASDYM equation is mathematically equivalent to the SDYM equation
in the sense of a coordinate transformation: $(z_3, z_4) \rightarrow (z_4, z_3)$.
For convenience and without losing generality, we only consider the anti-self-dual condition in this section.}
in different spaces.
One can refer to \cite{MW-book} for more details.

\subsection{$J$-formulation}\label{sec-2-0}

The metric matrix in $\mb C_4=(z_1,z_2,z_3,z_4)$ is given by
\begin{equation}\label{eta-mn}
	(\eta^{mn})_{4\times4}\doteq
	\begin{pmatrix}
		0 & 1 & 0 & 0 \\
		1 & 0 & 0 & 0 \\
		0 & 0 & 0 & -1 \\
		0 & 0 & -1 & 0
	\end{pmatrix},
	~~~~m,n=1,2,3,4,
\end{equation}
which is called the double null coordinates.
The metric can be calculated as
\begin{equation}\label{metric}
	\mr ds^2=\eta^{mn}\mr dz_m\mr dz_n=2(\mr dz_1\mr dz_2-\mr dz_3\mr dz_4).
\end{equation}
Let $G$ be the gauge group of a certain gauge field and $g$ be the Lie algebra of the group $G$.
The field strengths in this gauge field are then defined by
\begin{equation}
	F_{ij}\doteq[\mathcal D_i,\mathcal D_j]=\partial_i A_j-\partial_j A_i+[A_i,A_j],~~~~
\mathcal D_i\doteq\partial_i+A_i,
\end{equation}
where $[\cdot, ~\cdot]$ is the Lie bracket defined as $[H,G] = HG-GH$, $A_i\in g$ for $i=1,2,3,4$ are gauge potentials,
operator $\mathcal D_i$ is the covariant derivative and $\partial_i=\partial_{z_i}$.
The anti-self-dual condition is introduced as
\begin{equation}\label{ASD-condition}
	F_{ij}=-\frac{1}{2}\epsilon_{ijkl}\eta^{ka}\eta^{lb}F_{ab},
\end{equation}
where $\epsilon_{ijkl}$ is the Levi-Civita tensor, $\eta^{mn}$ follows the definition in \eqref{eta-mn},
and $i,j,k,l,a,b$ are indexes running over $\{1,2,3,4\}$.
Hence one can expand \eqref{ASD-condition} to obtain the following equations:
\begin{equation}\label{ASD-condition-expand}
	F_{13}=0,~~~~F_{24}=0,~~~~F_{12}-F_{34}=0.
\end{equation}
Let $(z,\ti z,w,\ti w)\doteq(z_1,z_2,z_3,z_4)$, the first two equations of \eqref{ASD-condition-expand}
indicate that there exist two invertible matrices $h$ and $\ti h$ such that
\begin{equation}
	\mathcal D_zh=0,~~~~\mathcal D_wh=0,~~~~\mathcal D_{\ti z}\ti h=0,~~~~\mathcal D_{\ti w}\ti h=0.
\end{equation}
$h,\ti h$ are called the generating matrices since the gauge potentials can be represented as
\begin{equation}\label{A-by-h}
	A_z=-(\partial_zh)h^{-1},~~~~A_w=-(\partial_wh)h^{-1},~~~~
A_{\ti z}=-(\partial_{\ti z}\ti h)\ti h^{-1},~~~~A_{\ti w}=-(\partial_{\ti w}\ti h)\ti h^{-1}.
\end{equation}
Then, by defining $J=h^{-1}\ti h$, the $J$-formulation of ASDYM (as well as SDYM) equation
can be derived from \eqref{ASD-condition-expand} as
\begin{equation}\label{ASDYM-unreduce}
	\partial_{\ti z}(J^{-1}\partial_zJ)-\partial_{\ti w}(J^{-1}\partial_wJ)=0.
\end{equation}
We call it the unreduced ASDYM equation.

\subsection{The ASDYM equation in different spaces}\label{sec-2-1}

Notice that the unreduced ASDYM equation \eqref{ASDYM-unreduce} reduces to \eqref{Yang-SDYM}
under the reality condition $\ti z=z^*$ and $\ti w=-w^*$,
where $*$ denote the complex conjugate.
This reality condition can be achieved through the following coordinate transformation:
\begin{equation}\label{coor-E}
	\begin{pmatrix}
		\ti z & w \\
		\ti w & z
	\end{pmatrix}
	=\frac{\sqrt2}{2}
	\begin{pmatrix}
		x^0+\mr ix^1 & -x^2+\mr ix^3 \\
		x^2+\mr ix^3 & x^0-\mr ix^1
	\end{pmatrix},
\end{equation}
where $\mathrm{i}^2=-1$, $x^0,x^1,x^2,x^3$ are real coordinates, i.e.
$(x^0,x^1,x^2,x^3)\in \mathbb{R}^4$.
The metric \eqref{metric} under the coordinate transformation \eqref{coor-E}
satisfies the Euclidean signature $(+,+,+,+)$, i.e.
\begin{align*}
	\mr d s^2=2(\mr dz\mr d\ti z-\mr dw\mr d\ti w)&
=(\mr dx^0+\mr i\mr dx^1)(\mr dx^0-\mr i\mr dx^1)+(\mr dx^2+\mr i\mr dx^3)(\mr dx^2-\mr i\mr dx^3) \\
	&=(\mr dx^0)^2+(\mr dx^1)^2+(\mr dx^2)^2+(\mr dx^3)^2.
\end{align*}
For the Minkowski space $\mb M$ with the signature $(+,-,-,-)$, the coordinate transformation is given by
\begin{equation}
	\begin{pmatrix}
		\ti z & w \\
		\ti w & z
	\end{pmatrix}
	=\frac{\sqrt2}{2}
	\begin{pmatrix}
		x^0+x^1 & x^2-\mr ix^3 \\
		x^2+\mr ix^3 & x^0-x^1
	\end{pmatrix},
\end{equation}
and the reality condition is $z,\ti z\in\mb R$, $\ti w=w^*$.
As for the ultrahyperbolic space $\mb U$ with split signature $(+,+,-,-)$,
there can be two different coordinate transformations:
\bsb\label{coor-U}
\begin{align}
	\label{coor-U-1}\begin{pmatrix}
		\ti z & w \\
		\ti w & z
	\end{pmatrix}
	&=\frac{\sqrt2}{2}
	\begin{pmatrix}
		x^0+\mr ix^1 & x^2-\mr ix^3 \\
		x^2+\mr ix^3 & x^0-\mr ix^1
	\end{pmatrix}, \\
	\label{coor-U-2}\begin{pmatrix}
		\ti z & w \\
		\ti w & z
	\end{pmatrix}
	&=\frac{\sqrt2}{2}
	\begin{pmatrix}
		x^0+x^2 & x^3-x^1 \\
		x^3+x^1 & x^0-x^2
	\end{pmatrix}.
\end{align}
\esb
Both cases result in the same metric, i.e.,
\begin{align*}
	\mr d s^2=2(\mr dz\mr d\ti z-\mr dw\mr d\ti w)&
=(\mr dx^0+\mr i\mr dx^1)(\mr dx^0-\mr i\mr dx^1)-(\mr dx^2+\mr i\mr dx^3)(\mr dx^2-\mr i\mr dx^3) \\
	&=(\mr dx^0)^2+(\mr dx^1)^2-(\mr dx^2)^2-(\mr dx^3)^2,
\end{align*}
and
\begin{align*}
	\mr d s^2=2(\mr dz\mr d\ti z-\mr dw\mr d\ti w)
&=(\mr dx^0-\mr dx^2)(\mr dx^0+\mr dx^2)-(\mr dx^3-\mr dx^1)(\mr dx^3+\mr dx^1) \\
	&=(\mr dx^0)^2+(\mr dx^1)^2-(\mr dx^2)^2-(\mr dx^3)^2.
\end{align*}
Thus, the reality condition of \eqref{coor-U-1} is  $\ti z=z^*$, $\ti w=w^*$.
For \eqref{coor-U-2}, it is $z,\ti z,w,\ti w\in\mb R$.

For convenience, we denote $\mb U$ with coordinate \eqref{coor-U-1} as $\mb U_1$
and denote $\mb U$ with coordinate \eqref{coor-U-2} as $\mb U_2$.
Then, the above results can be summarized in Table \ref{Tab-1}.
\begin{table}[ht]
	\begin{center}
		\begin{tabular}{|c|c|c|} \hline
			Space  & Reality condition & ASDYM equation   \\ \hline
			$\mb E$ & $\ti z=z^*,\ti w=-w^*$
& $\partial_{z^*}(J^{-1}\partial_zJ)+\partial_{w^*}(J^{-1}\partial_wJ)=0$   \\ \hline
			$\mb M$ & $z,\ti z\in\mb R,\ti w=w^*$	
&	$\partial_{\ti z}(J^{-1}\partial_zJ)-\partial_{w^*}(J^{-1}\partial_wJ)=0$  \\ \hline
			$\mb U_1$ & $\ti z=z^*$, $\ti w=w^*$	
&	$\partial_{z^*}(J^{-1}\partial_zJ)-\partial_{w^*}(J^{-1}\partial_wJ)=0$  \\ \hline
			$\mb U_2$ & $z,\ti z,w,\ti w\in\mb R$	
&	$\partial_{\ti z}(J^{-1}\partial_zJ)-\partial_{\ti w}(J^{-1}\partial_wJ)=0$  \\ \hline
		\end{tabular}
	\end{center}
	\caption{The ASDYM equations in various spaces}\label{Tab-1}
\end{table}

\subsection{Gauge conditions}\label{sec-2-2}

The gauge condition of Yang's $J$-formulation \eqref{Yang-SDYM} in SU($N$) group requires
$J$ to be a Hermitian matrix  with determinant being one \cite{Po-1980,LQYZ-SAPM-2022,LSS-2023}.
By mentioning the concept of gauge condition,
we mean certain conditions satisfied by matrix $J$
when the gauge potentials take values from the Lie algebra of a certain gauge group.

In the following, we introduce the $\mr{SU}(N)$ gauge conditions of the ASDYM equation in various spaces.
To achieve that, we first consider gauge conditions for the gauge group $\mr{SL}(N)$ and $\mr{U}(N)$ separately,
and then combine them together.

For  the case of $G=\mr{SL}(N)$, the gauge potentials satisfy
\begin{equation}
	\mr{tr}(A_{\alpha})=0,~~~~\mr{tr}(A_{\ti\alpha})=0,~~~~\alpha=z,w,
\end{equation}
which is valid no matter the metric space is Euclidean, ultrahyperbolic or of Minkovski.
This indicates
\begin{equation}
	\mr{tr}((\partial_\alpha h)h^{-1})=\frac{\partial_\alpha|h|}{|h|}=0,~~~~
	\mr{tr}((\partial_{\ti \alpha}\ti h)\ti h^{-1})=\frac{\partial_{\ti \alpha}|\ti h|}{|\ti h|}=0.
\end{equation}
which means both $|h|$ and $|\ti h|$ are constants.
We take $|h|=|\ti h|=1$ by normalization, which does not lose  generality.
Thus the determinant of $J$ can be calculated as $|J|=|h^{-1}\ti h|=|h^{-1}||\ti h|=1$.

For the case of $G=\mr{U}(N)$, the gauge conditions of $J$ are different in various spaces.
So we need to introduce them separately.
Starting from the Euclidean space $\mb E$ where the reality condition is $\ti z=z^*,\ti w=-w^*$, we determine
\begin{equation}\label{new-gauge-potential}
	A_z\doteq\frac{\sqrt 2}{2}(A_0+\mr iA_1),~~~A_{z^*}\doteq\frac{\sqrt 2}{2}(A_0-\mr iA_1),~~~
	A_w\doteq\frac{\sqrt 2}{2}(A_2+\mr iA_3),~~~A_{w^*}\doteq\frac{\sqrt 2}{2}(A_2-\mr iA_3).
\end{equation}
The gauge potentials satisfy
\begin{equation}\label{gauge-condition}
	(A_i)^\dagger=-A_i,~~~~i=0,1,2,3,
\end{equation}
where $A_i^\dagger=(A_i^*)^T$. Then we have
\begin{equation}\label{new-gauge-condition}
	(A_z)^\dagger=-A_{z^*},~~~~(A_w)^\dagger=-A_{w^*}.
\end{equation}
By using \eqref{A-by-h}, one obtains
\bsb\label{relation-h-E}
\begin{align}
	\label{potential-w}&(h^{-1})^\dagger(\partial_{z^*}h^\dagger)=-(\partial_{z^*}\ti h)\ti h^{-1}
	\rightarrow
	\partial_{z^*}(\ti h^{-1})=(h^\dagger\ti h)^{-1}(\partial_{z^*}h^\dagger), \\
	\label{potential-ti-w}&(h^{-1})^\dagger(\partial_{w^*}h^\dagger)=-(\partial_{w^*}\ti h)\ti h^{-1}
	\rightarrow
	\partial_{w^*}(\ti h^{-1})=(h^\dagger\ti h)^{-1}(\partial_{w^*}h^\dagger).
\end{align}
\esb
Relations \eqref{potential-w} and \eqref{potential-ti-w} are compatible with the condition
$h=(\ti h^\dagger)^{-1}$. Thus $J=h^{-1}\ti h=\ti h^\dagger\ti h$, which is a Hermitian matrix.
Next, for the space $\mb U_1$, where $\ti z=z^*, \ti w=w^*$,
the definitions of gauge potentials still follow \eqref{new-gauge-potential} and
they satisfy \eqref{new-gauge-condition}, which reveals the same relations of $h$ and $\ti h$ as in \eqref{relation-h-E}.
Thus in this case, the solution $J$ will also be a Hermitian matrix.
Finally, for the  $\mb U_2$ space, we determine
\begin{equation}
	A_z=\frac{\sqrt2}{2}(A_0-A_2),~~~A_{\ti z}=\frac{\sqrt2}{2}(A_0+A_2),~~~
	A_w=\frac{\sqrt2}{2}(A_3-A_1),~~~A_{\ti w}=\frac{\sqrt2}{2}(A_3+A_1).
\end{equation}
According to \eqref{gauge-condition}, we soon obtain
\begin{equation}
	(A_{\alpha})^\dagger=-A_{\alpha},~~~~(A_{\ti\alpha})^\dagger=-A_{\ti\alpha},~~~~a=z,w.
\end{equation}
Thus
\begin{equation}
	\partial_\alpha(h^{-1})=(h^\dagger h)^{-1}(\partial_{\alpha}h^\dagger),~~~~
	\partial_{\ti\alpha}(\ti h^{-1})=(\ti h^\dagger \ti h)^{-1}(\partial_{\ti \alpha}\ti h^\dagger),
\end{equation}
which holds when $h^\dagger h=\ti h^\dagger \ti h=\bs I_N$, where $\bs I_N$ stands for the $N$th-order identity matrix.
Hence we have the solution constructed by $J=h^{-1}\ti h$ that   satisfies the unitary condition:
\begin{align}
	J^\dagger J=(h^{-1}\ti h)^\dagger(h^{-1}\ti h)=\ti h^\dagger(h^{-1})^\dagger h^{-1}\ti h=\bs I_N.
\end{align}
Note that there is no such nontrivial gauge condition for the case of the Minkowski space  (see Remark \ref{rem-1}).

Especially, when $G$ is of $\mr{SU}(N)$, solution $J$ satisfies the gauge condition of $\mr{SL}(N)$
and $\mr{U}(N)$ simultaneously.
We summarize the above results in the following table.

\begin{table}[ht]
	\begin{center}
		\begin{tabular}{|c|c|} \hline
			Spaces  & Gauge condition   \\ \hline
			$\mb E$ & $|J|=1$, $J^\dagger=J$ \\ \hline
			$\mb U_1$ & $|J|=1$,  $J^\dagger=J$ \\ \hline
			$\mb U_2$ & $|J|=1$, $J^\dagger J=\bs I_N$  \\ \hline
		\end{tabular}
	\end{center}
	\caption{Gauge conditions of SU($N$) gauge groups in different spaces}\label{Tab-2}
\end{table}

\begin{remark}\label{rem-1}
	In the Minkowski space $\mb M$, there is no such nontrivial gauge condition of $J$
to make the gauge potentials belong to $\mr{su}(N)$.
In this case, the gauge potentials are determined as
	\begin{equation}
		A_z=\frac{\sqrt2}{2}(A_0-A_1),~~~A_{\ti z}=\frac{\sqrt2}{2}(A_0+A_1),~~~
		A_w=\frac{\sqrt2}{2}(A_2+\mr iA_3),~~~A_{\bar w}=\frac{\sqrt2}{2}(A_2-\mr iA_3).
	\end{equation}
	Then the generating matrices $h,\ti h$ should satisfy
	\begin{align}
		\partial_z(h^{-1})=(h^\dagger h)^{-1}(\partial_{z}h^\dagger),~~~~
		\partial_{\ti z}(\ti h^{-1})=(\ti h^\dagger \ti h)^{-1}(\partial_{\ti z}\ti h^\dagger),~~~~
		\partial_{\bar w}(\ti h^{-1})=(h^\dagger\ti h)^{-1}(\partial_{\bar w}h^\dagger),
	\end{align}
	which requires that $h^\dagger h=\ti h^\dagger\ti h=\bs I_N$ and $\ti h^{-1}=h^\dagger$ hold simultaneously.
Hence $J$ will be a Hermitian matrix and a unitary matrix, which shows $J^\dagger J=J^2=\bs I_N$ and
thus $J$ has to be $J=\mr{diag}(j_1,j_2,\cdots,j_N)$, $j_i=\pm1$, $i=1,\cdots,N$.
	
\end{remark}

\section{Direct linearization scheme of the unreduced ASDYM}\label{sec-3}

In this section, we aim to describe a DL scheme for the unreduced ASDYM equation \eqref{ASDYM-unreduce}.
The idea is motivated by the link between the SDYM equation and the AKNS system that we
have found in \cite{LQYZ-SAPM-2022}.
We will establish the DL scheme for the negative and positive AKNS hierarchy
from which we are able to construct solutions of the unreduced ASDYM equation.
The DL scheme is composed of the following steps.

\subsection{Linear integral equations}\label{sec-3-1}

The DL scheme of our interest starts form the following integral equation set
(cf.\cite{Nijhoff-1983}):
\bsb\label{int-eq}
\begin{align}
	\label{int-eq-1}
\mf u(k)-\iint_D\mu(l,l')\frac{\rho_1(k)\sigma_2(l')}{k-l'}\cdot\frac{\rho_2(l')\sigma_1(l)}{l'-l}\mf u(l)
=\rho_1(k)\mf c_k, \\
	\label{int-eq-2}
\mf v(k)-\iint_D\mu(l,l')\frac{\rho_2(k)\sigma_1(l)}{k-l}\cdot\frac{\rho_1(l)\sigma_2(l')}{l-l'}\mf v(l')
=\rho_2(k)\mf c_k.
\end{align}
\esb
Here,
$\mf u(k)$ and $\mf v(k)$ are infinite-dimensional column vector functions of $\mf x=(\cdots,x_{-1},x_0,x_1,\cdots)$
and rely on a  parameter $k$,
while $\mf c_k=(\cdots,k^{-2},k^{-1},1,k,k^2,\cdots)^T$.
The plane wave factors are taken as
\bsb\label{PWF-3.2}
\begin{align}
	& \rho_1(l)=\exp\Big(\sum_{n\in\mb Z}l^n  x_n+\lambda_1(l)\Big),& &\rho_2(l')
=\exp\Big(-\sum_{n\in\mb Z}l'^n  x_n+\lambda_2(l')\Big), \\
	&\sigma_1(l)=\exp\Big(\sum_{n\in\mb Z}l^n  x_n+\gamma_1(l)\Big),& &\sigma_2(l')
=\exp\Big(-\sum_{n\in\mb Z}l'^n  x_n+\gamma_2(l')\Big),
\end{align}
\esb
where the initial values $\lambda_i(l)$ and $\gamma_i(l)$, $i=1,2$, are constants depending on $l$.
At this stage, the integral measure $\mu(l,l')$ can be general and the integration domain
$D$ is any suitably chosen subset in $\mathbb{C}^2$,
such that it holds as the general assumption
that for given measure and integration domain $D$ the solution of each integral equation is unique.
In practice, we assume that the measure and the integration domain
can be separated into two independent parts, i.e.
\begin{equation}\label{3.3}
	\mu(l,l')=\mu_1(l)\cdot\mu_2(l'),~~~~D=D_1\times D_2,
\end{equation}
where $D_1$ and $D_2$ are independent  contours   and $\l\in D_1$, $l'\in D_2$, respectively.
Such a setting allows us to decompose the double integral as
\begin{equation}\label{tech-1}
	\iint_D f(l)g(l') \mu(l,l')=\int_{D_1}f(l)\mu_1(l)\cdot\int_{D_2}g(l')\mu_2(l'),
\end{equation}
from which we proceed with the follow-up steps.

\subsection{Infinite-dimensional matrices}\label{sec-3-2}

To proceed, we introduce infinite-dimensional matrices $\mf I, \bs\Lambda$ and $\mf O$,
where their $(i,j)$-th entries are  respectively
\begin{align}
	(\mf I)_{i,j}\doteq\delta_{i,j},~~~~
	(\bs\Lambda)_{i,j}\doteq\delta_{i+1,j},~~~~(\mf O)_{i,j}\doteq\delta_{i,0}\delta_{0,j},
\end{align}
with
\begin{align*}
	\delta_{i,j}=
	\left\{
	\begin{array}{lll}
		1, && i=j,\\
		0,  && i\neq j.
	\end{array}\right.
\end{align*}
$\mf I$ is the identity matrix.
$\mf O$ is a projective matrix satisfying $\mf O^2=\mf O$,
and for any  column vectors
\begin{align*}
 \mf a\doteq(\cdots,a_{-1},a_{0},a_{1},\cdots)^T, ~~\mf b\doteq(\cdots,b_{-1},b_0,b_1,\cdots)^T.
\end{align*}
there is $\mf a^T\mf O\mf b=a_0b_0$.
Matrix $\bs\Lambda$ acts as a shift operator, e.g., for $\mf c_k$ defined in Sec.\ref{sec-3-1},
it holds that $\bs\Lambda\mf c_k =k \mf c_k$.
Next, using $\bs\Lambda$ and $\mf O$ we
define matrix $\bs\Omega$ by
\begin{align}\label{def-omega-1}
	\bs\Omega=\sum_{i=0}^{\infty}(\bs\Lambda^T)^{-i-1}\mf O\bs\Lambda^i.
\end{align}
It then follows that
\begin{align}
		\mf c_k^T\bs\Omega\mf c_l=\sum_{i=0}^{\infty}\mf c_k^T(\bs\Lambda^T)^{-i-1}\mf O\bs\Lambda^i\mf c_l
=	\sum_{i=0}^{\infty}\frac{l^i}{k^{i+1}}=\frac{1}{k}\Big(\frac{1}{1-\frac{l}{k}}\Big)=\frac{1}{k-l}.
\end{align}
This indicates  a skew symmetric relation $\bs\Omega=-\bs\Omega^T$,
which agrees with an equivalent definition of $\bs\Omega$, i.e.
\begin{align}\label{def-omega-2}
	\bs\Omega=-\sum_{i=0}^{\infty}(\bs\Lambda^T)^{i}\mf O\bs\Lambda^{-i-1}.
\end{align}
Note that $\bs\Omega$ satisfies
\begin{equation}\label{OOmega}
\bs\Lambda^T \bs\Omega -\bs\Omega\bs\Lambda =\mf O.
\end{equation}

In terms of these matrices, one can rewrite the linear integral equations in \eqref{int-eq}.
\begin{proposition}\label{Prop-0}
The linear integral equations \eqref{int-eq} can be rewritten as
\begin{subequations}\label{3.9}
\begin{align}\label{uk-1}
&	\mf u(k)=(\mf I+\mf U\bs\Omega\mf C_2\bs\Omega)\rho_1(k)\mf c_k,\\
&	\mf v(k)=(\mf I+\mf V\bs\Omega\mf C_1\bs\Omega)\rho_2(k)\mf c_k,  \label{vk-1}
\end{align}
\end{subequations}
where we  define
\begin{subequations}\label{3.10}
\begin{align}
&	\mf U\doteq\int_{D_1}\mu_1(l)\mf u(l)\mf c_{l}^T\sigma_1(l),~~~
	\mf V\doteq\int_{D_2}\mu_2(l')\mf v(l')\mf c^T_{l'}\sigma_2(l'),\label{def-U-C2}\\
&\mf C_1\doteq\int_{D_1}\mu_1(l)\rho_1(l)\mf c_{l}\mf c_{l}^T\sigma_1(l),~~~
  \mf C_2\doteq\int_{D_2}\mu_2(l')\rho_2(l')\mf c_{l'}\mf c_{l'}^T\sigma_2(l'). \label{def-V-C1}
\end{align}
\end{subequations}
\end{proposition}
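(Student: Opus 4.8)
The plan is to verify \eqref{3.9} by substituting the scalar kernel representation of $\bs\Omega$ into the integral equations \eqref{int-eq} and then recognizing the matrices defined in \eqref{3.10}. The only ingredients I need are the bilinear identity $\mf c_a^T\bs\Omega\mf c_b=1/(a-b)$ established just above the proposition, the factorization \eqref{3.3} of the measure and domain together with its consequence \eqref{tech-1}, and the observation that every factor $\rho_i,\sigma_i$ in the integrand is a scalar and may therefore be moved freely past the infinite-dimensional vectors and matrices.

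First I would treat \eqref{int-eq-1}. After moving the integral term to the right-hand side and applying the factorization \eqref{3.3}, the integrand is a scalar multiple of the column vector $\mf u(l)$, carrying the two denominators $1/(k-l')$ and $1/(l'-l)$ together with the scalars $\rho_1(k),\sigma_1(l),\rho_2(l'),\sigma_2(l')$. The key bookkeeping step is to reassemble this scalar product as a matrix product by inserting rank-one blocks: I write the denominators as $\mf c_l^T\bs\Omega\mf c_{l'}$ and $\mf c_{l'}^T\bs\Omega\mf c_k$, group $\mf u(l)\mf c_l^T$ on the left, $\mf c_{l'}\mf c_{l'}^T$ in the middle, and $\mf c_k$ on the right. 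Invoking \eqref{tech-1} to split the double integral, the $D_1$-integral of $\mf u(l)\mf c_l^T\sigma_1(l)$ is exactly $\mf U$ of \eqref{def-U-C2}, while the $D_2$-integral of $\rho_2(l')\mf c_{l'}\mf c_{l'}^T\sigma_2(l')$ is $\mf C_2$ of \eqref{def-V-C1}, with the two copies of $\bs\Omega$ sandwiched between them and $\rho_1(k)\mf c_k$ remaining on the right. This identifies the integral term with $\mf U\bs\Omega\mf C_2\bs\Omega\,\rho_1(k)\mf c_k$, and \eqref{uk-1} follows by restoring the $\mf I$ term. The derivation of \eqref{vk-1} from \eqref{int-eq-2} is entirely parallel under the exchange $1\leftrightarrow2$ and $D_1\leftrightarrow D_2$, with $\mf U,\mf C_2$ replaced by $\mf V,\mf C_1$.

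The part that requires the most care is the sign and ordering bookkeeping rather than any genuine difficulty. Because $\bs\Omega$ is skew-symmetric, each denominator admits either orientation of the $\mf c$-vectors, and I must select the orientation forced by the matrix order $\mf U\bs\Omega\mf C_2\bs\Omega$ dictated by \eqref{uk-1}. Concretely, expanding that product from the right produces the denominators $(l-l')$ and $(l'-k)$, whose product equals $(k-l')(l'-l)$ appearing in \eqref{int-eq-1}; the two sign reversals cancel, which is precisely the manifestation of $\bs\Omega^T=-\bs\Omega$, and I would state this cancellation explicitly so that the reader sees the orientations are forced. Beyond this, the argument is purely formal manipulation at the level of the integrands, and the uniqueness assumption on the solutions of \eqref{int-eq} guarantees that the rewritten system \eqref{3.9} is genuinely equivalent to the original rather than merely implied by it.
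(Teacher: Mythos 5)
Your proposal is correct and follows essentially the same route as the paper: replace the Cauchy kernels by the bilinear forms $\mf c_a^T\bs\Omega\mf c_b$, use the separability \eqref{3.3} via \eqref{tech-1} to split the double integral, and recognize $\mf U$, $\mf C_2$ (respectively $\mf V$, $\mf C_1$) with the two copies of $\bs\Omega$ sandwiched between them. Your explicit remark that the two orientation reversals forced by the ordering $\mf U\bs\Omega\mf C_2\bs\Omega$ cancel via $\bs\Omega^T=-\bs\Omega$ is a point the paper leaves implicit, but it is the same argument.
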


\begin{proof}
	By replacing $\frac{1}{k-l}$ with $\mf c_k^T\bs\Omega\mf c_l$, one obtains
	\begin{align*}
		\mf u(k)-\iint_D\mu(l,l')\rho_1(k)\mf c_k^T\bs\Omega\mf c_{l'}\sigma_2(l')
 \rho_2(l')\mf c^T_{l'}\bs\Omega\mf c_l\sigma_1(l) \mf u(l)=\rho_1(k)\mf c_k.
	\end{align*}
It then  follows from \eqref{3.3} that
	\begin{align*}
		\mf u(k)-\int_{D_1}\mu(l)\mf u(l)\mf c_l^T\sigma_1(l)\,
\bs\Omega\int_{D_2}\mu(l')\rho_2(l')\mf c_{l'}\mf c_{l'}^T\sigma_2(l')
\bs\Omega\rho_1(k)\mf c_k=\rho_1(k)\mf c_k,
	\end{align*}
which gives rise to the form  \eqref{uk-1}.
\eqref{vk-1} can be obtained in a similar way.
\end{proof}

Note that $\mf U$ and $\mf V$ can be expressed in terms of $\bs\Omega$ and $\mf C_j$.
In fact, from \eqref{uk-1}, by integration we have
\begin{align}
	\int_{D_1}\zeta_1(k)\mf u(k)\mf c_{k}^T\sigma_1(k)
	=(\mf I+\mf U\bs\Omega\mf C_2\bs\Omega)\int_{D_1}\zeta_1(k)\rho_1(k)\mf c_k\mf c_{k}^T\sigma_1(k),
\end{align}
which, in light of   \eqref{3.10}, is
\begin{equation}
	\mf U=(\mf I+\mf U\bs\Omega\mf C_2\bs\Omega)\mf C_1,
\end{equation}
or alternatively
\begin{equation}\label{U-rep}
	\mf U=\mf C_1(\mf I-\bs\Omega\mf C_2\bs\Omega\mf C_1)^{-1}
=(\mf C_1^{-1}-\bs\Omega\mf C_2\bs\Omega)^{-1}.
\end{equation}
Similarly, we can obtain the following expressions for $\mf V$:
\begin{equation}\label{V-rep}
	\mf V=(\mf I+\mf V\bs\Omega\mf C_1\bs\Omega)\mf C_2
=(\mf C_2^{-1}-\bs\Omega\mf C_1\bs\Omega)^{-1}.
\end{equation}

\subsection{Some properties}\label{sec-3-3}

Later, we will  use $\mf U$ and $\mf V$ to construct solutions for the unreduced ASDYM equation.
In this part, let us  their properties.

\begin{proposition}\label{prop-1}
	Infinite-dimensional matrices $\mf U$ and $\mf V$ are symmetric matrices.
\end{proposition}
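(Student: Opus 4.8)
The plan is to bypass the integral equations entirely and work from the closed-form representations $\mf U=(\mf C_1^{-1}-\bs\Omega\mf C_2\bs\Omega)^{-1}$ and $\mf V=(\mf C_2^{-1}-\bs\Omega\mf C_1\bs\Omega)^{-1}$ established in \eqref{U-rep} and \eqref{V-rep}. The symmetry of $\mf U$ and $\mf V$ will then be reduced to two ingredients already available: the symmetry of the building blocks $\mf C_1,\mf C_2$, and the skew-symmetry $\bs\Omega^T=-\bs\Omega$ recorded in Sec.~\ref{sec-3-2}.

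First I would note that $\mf C_1$ and $\mf C_2$ are symmetric. This is immediate from their definitions in \eqref{def-V-C1}: each integrand is a scalar factor ($\mu_j\rho_j\sigma_j$) times the rank-one outer product $\mf c_l\mf c_l^T$, which obeys $(\mf c_l\mf c_l^T)^T=\mf c_l\mf c_l^T$; since integration preserves symmetry, $\mf C_j^T=\mf C_j$ and hence $(\mf C_j^{-1})^T=\mf C_j^{-1}$. The core computation is then to transpose the bracketed operator. Using $\bs\Omega^T=-\bs\Omega$ and $\mf C_2^T=\mf C_2$ gives
\[
(\bs\Omega\mf C_2\bs\Omega)^T=\bs\Omega^T\mf C_2^T\bs\Omega^T=(-\bs\Omega)\mf C_2(-\bs\Omega)=\bs\Omega\mf C_2\bs\Omega,
\]
so the operator $\mf C_1^{-1}-\bs\Omega\mf C_2\bs\Omega$ is symmetric; taking inverse yields $\mf U^T=\mf U$. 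The argument for $\mf V$ is identical after exchanging the roles of $\mf C_1$ and $\mf C_2$.

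The one point I would treat with care — and the only genuine obstacle — is the legitimacy of the identities $(AB)^T=B^TA^T$ and $(M^{-1})^T=(M^T)^{-1}$ in the present infinite-dimensional setting, where the matrices are indexed by $\mb Z$ and the inverses in \eqref{U-rep}, \eqref{V-rep} are defined only formally. To make this rigorous I would work at the level of the formal Neumann series implicit in \eqref{U-rep}, namely
\[
\mf U=\mf C_1(\mf I-\bs\Omega\mf C_2\bs\Omega\mf C_1)^{-1}=\sum_{n\ge0}\mf C_1(\bs\Omega\mf C_2\bs\Omega\mf C_1)^n,
\]
where the transpose can be applied term by term. Each summand transposes to $(\mf C_1\bs\Omega\mf C_2\bs\Omega)^n\mf C_1$ by the computation above, and re-summing together with the push-through identity $(\mf I-\mf C_1\bs\Omega\mf C_2\bs\Omega)^{-1}\mf C_1=\mf C_1(\mf I-\bs\Omega\mf C_2\bs\Omega\mf C_1)^{-1}$ returns $\mf U$. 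This confirms $\mf U^T=\mf U$ without invoking a genuine operator inverse, and the same series manipulation applied to \eqref{V-rep} gives $\mf V^T=\mf V$.
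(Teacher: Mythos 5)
Your proof is correct and follows essentially the same route as the paper's: both start from the closed forms \eqref{U-rep}--\eqref{V-rep}, observe that $\mf C_1,\mf C_2$ are symmetric by \eqref{def-V-C1} and that $\bs\Omega^T=-\bs\Omega$, and conclude that the operator inside the inverse is symmetric. Your additional Neumann-series justification of transposing the formal inverse is extra rigor the paper simply takes for granted, not a different argument.
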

\begin{proof}
It is apparent from   \eqref{def-V-C1}
that both $\mf C_1$ and $\mf C_2$ are symmetric matrices.
Since $\bs\Omega^T=-\bs\Omega$, we then have
	\begin{align*}
		\mf U^T=((\mf C_1^{-1}-\bs\Omega\mf C_2\bs\Omega)^{-1})^T
		=(\mf C_1^{-1}-\bs\Omega\mf C_2\bs\Omega)^{-1}=\mf U, \\
		\mf V^T=((\mf C_2^{-1}-\bs\Omega\mf C_1\bs\Omega)^{-1})^T
		=(\mf C_2^{-1}-\bs\Omega\mf C_1\bs\Omega)^{-1}=\mf V.
	\end{align*}
	Therefore, matrices $\mf U$ and $\mf V$ are symmetric.
\end{proof}

\begin{proposition}\label{prop-2}
	Let $\partial_n$ be the abbreviation of $\frac{\partial}{\partial {x_n}}$.
	Then, calculating the derivatives of $\mf U$ and $\mf V$,
	we have the following evolution relations:
	\bsb\label{evo-UV}
	\begin{align}
		\label{evo-U}\partial_n\mf U&=\bs\Lambda^n\mf U+\mf U(\bs\Lambda^T)^n
+\overline{\mf U}\mf O_n\mf U-\mf U\mf O_n\overline{\mf V}, \\
		\label{evo-V}\partial_n\mf V&=-\bs\Lambda^n\mf V-\mf V(\bs\Lambda^T)^n
-\overline{\mf V}\mf O_n\mf V+\mf V\mf O_n\overline{\mf U},
	\end{align}
	\esb
	where
	\begin{equation}\label{3.16}
		\overline{\mf U}\doteq-\mf U\bs\Omega\mf C_2, ~~~\overline{\mf V}\doteq-\mf V\bs\Omega\mf C_1, ~~~
		\mf O_n\doteq(\bs\Lambda^T)^n\bs\Omega-\bs\Omega\bs\Lambda^n.
	\end{equation}
\end{proposition}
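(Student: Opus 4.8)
The plan is to differentiate the closed form $\mf U=(\mf C_1^{-1}-\bs\Omega\mf C_2\bs\Omega)^{-1}$ from \eqref{U-rep} directly, thereby reducing the problem to the $x_n$-derivatives of the ``data'' matrices $\mf C_1,\mf C_2$ defined in \eqref{def-V-C1} together with purely algebraic manipulations of the constant matrices $\bs\Omega$ and $\bs\Lambda$. First I would record the elementary evolution of $\mf C_1$ and $\mf C_2$. Since the only $\mf x$-dependence sits in the plane wave factors \eqref{PWF-3.2}, with $\partial_n\rho_1=l^n\rho_1$, $\partial_n\sigma_1=l^n\sigma_1$, $\partial_n\rho_2=-l'^n\rho_2$, $\partial_n\sigma_2=-l'^n\sigma_2$, while $\bs\Lambda\mf c_l=l\mf c_l$ lets each factor $l^n$ be absorbed as $\bs\Lambda^n$ acting on the left or $(\bs\Lambda^T)^n$ on the right, I obtain the clean relations $\partial_n\mf C_1=\bs\Lambda^n\mf C_1+\mf C_1(\bs\Lambda^T)^n$ and $\partial_n\mf C_2=-\bs\Lambda^n\mf C_2-\mf C_2(\bs\Lambda^T)^n$.

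Next, using $\partial_n(\mf A^{-1})=-\mf A^{-1}(\partial_n\mf A)\mf A^{-1}$ with $\mf A=\mf C_1^{-1}-\bs\Omega\mf C_2\bs\Omega$, together with $\partial_n(\mf C_1^{-1})=-\mf C_1^{-1}(\partial_n\mf C_1)\mf C_1^{-1}$ and the fact that $\bs\Omega$ is $\mf x$-independent, I would arrive at
\[
\partial_n\mf U=\mf U\big[\mf C_1^{-1}\bs\Lambda^n+(\bs\Lambda^T)^n\mf C_1^{-1}-\bs\Omega\bs\Lambda^n\mf C_2\bs\Omega-\bs\Omega\mf C_2(\bs\Lambda^T)^n\bs\Omega\big]\mf U.
\]
The remaining task is to massage these four terms into the target form. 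The tools are three identities that follow from \eqref{U-rep}, \eqref{V-rep} and \eqref{OOmega}: the contraction rules $\mf U\mf C_1^{-1}=\mf I-\overline{\mf U}\bs\Omega$ and $\mf C_1^{-1}\mf U=\mf I-\bs\Omega\overline{\mf V}$, obtained from $\mf U\mf A=\mf A\mf U=\mf I$ and the definitions in \eqref{3.16}; the generalized commutator $\bs\Omega\bs\Lambda^n=(\bs\Lambda^T)^n\bs\Omega-\mf O_n$, obtained by telescoping \eqref{OOmega}; and the contractions $\mf U\bs\Omega\mf C_2=-\overline{\mf U}$, $\mf C_2\bs\Omega\mf U=-\overline{\mf V}$.

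I expect the genuine obstacle to be the cross-relation $\mf C_2\bs\Omega\mf U=-\overline{\mf V}$ (equivalently $\overline{\mf U}^T=-\overline{\mf V}$), which is precisely what couples the $\mf U$-equation to $\mf V$ and is responsible for the term $-\mf U\mf O_n\overline{\mf V}$ as well as for the contraction $\mf C_1^{-1}\mf U=\mf I-\bs\Omega\overline{\mf V}$. I would establish it with the push-through identity $(\mf I-\mf A\mf B)^{-1}\mf A=\mf A(\mf I-\mf B\mf A)^{-1}$: writing $\mf U=\mf C_1(\mf I-\bs\Omega\mf C_2\bs\Omega\mf C_1)^{-1}$ and $\mf V=\mf C_2(\mf I-\bs\Omega\mf C_1\bs\Omega\mf C_2)^{-1}$ from \eqref{U-rep}--\eqref{V-rep}, both $\mf C_2\bs\Omega\mf U$ and $\mf V\bs\Omega\mf C_1$ collapse to the single expression $\mf C_2\bs\Omega\mf C_1(\mf I-\bs\Omega\mf C_2\bs\Omega\mf C_1)^{-1}$; since $\mf V\bs\Omega\mf C_1=-\overline{\mf V}$, the claim follows (symmetry of $\mf C_1,\mf C_2,\mf U,\mf V$ and $\bs\Omega^T=-\bs\Omega$ from Proposition~\ref{prop-1} then give the transpose form).

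Finally I would substitute these identities. Term~1 yields $\bs\Lambda^n\mf U+\overline{\mf U}\mf O_n\mf U-\overline{\mf U}(\bs\Lambda^T)^n\bs\Omega\mf U$; term~2 yields $\mf U(\bs\Lambda^T)^n-\mf U(\bs\Lambda^T)^n\bs\Omega\overline{\mf V}$; term~3, after writing $\bs\Omega\bs\Lambda^n=(\bs\Lambda^T)^n\bs\Omega-\mf O_n$ and contracting $\mf C_2\bs\Omega\mf U=-\overline{\mf V}$, yields $\mf U(\bs\Lambda^T)^n\bs\Omega\overline{\mf V}-\mf U\mf O_n\overline{\mf V}$; and term~4, using $\mf U\bs\Omega\mf C_2=-\overline{\mf U}$, yields $\overline{\mf U}(\bs\Lambda^T)^n\bs\Omega\mf U$. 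The stray pieces then cancel in pairs, namely $\mp\,\overline{\mf U}(\bs\Lambda^T)^n\bs\Omega\mf U$ between terms~1 and~4 and $\mp\,\mf U(\bs\Lambda^T)^n\bs\Omega\overline{\mf V}$ between terms~2 and~3, leaving exactly the four survivors of \eqref{evo-U}. The relation \eqref{evo-V} for $\mf V$ then follows verbatim by the symmetry $1\leftrightarrow2$, $\mf U\leftrightarrow\mf V$, $\overline{\mf U}\leftrightarrow\overline{\mf V}$, with the overall sign reversal inherited from $\partial_n\mf C_2=-\bs\Lambda^n\mf C_2-\mf C_2(\bs\Lambda^T)^n$.
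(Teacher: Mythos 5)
Your proposal is correct and follows essentially the same route as the paper: differentiate $\mf U^{-1}=\mf C_1^{-1}-\bs\Omega\mf C_2\bs\Omega$, feed in $\partial_n\mf C_1=\bs\Lambda^n\mf C_1+\mf C_1(\bs\Lambda^T)^n$ and $\partial_n\mf C_2=-\bs\Lambda^n\mf C_2-\mf C_2(\bs\Lambda^T)^n$, and reassemble using the definition of $\mf O_n$ together with the contractions $\mf U\bs\Omega\mf C_2=-\overline{\mf U}$ and $\overline{\mf U}^T=-\overline{\mf V}$. The only cosmetic difference is that you establish the cross-relation $\mf C_2\bs\Omega\mf U=\mf V\bs\Omega\mf C_1=-\overline{\mf V}$ by the push-through identity, whereas the paper gets it by transposing $\overline{\mf U}$ and invoking the symmetry of $\mf C_1,\mf C_2,\mf U,\mf V$ and the antisymmetry of $\bs\Omega$; both are valid formal manipulations of the same objects.
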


\begin{proof}
	We start by considering the evolution of $\mf C_1$.
	According to \eqref{def-V-C1} we have
	\begin{align*}
		\partial_n\mf C_1=\int_{D_1}\mu_1(l)\partial_{n}\rho_1(l)\mf c_{l}\mf c_{l}^T\sigma_1(l)
		+\int_{D_1}\mu_1(l)\rho_1(l)\mf c_{l}\mf c_{l}^T\partial_{n}\sigma_1(l)
		=2\int_{D_1}\mu_1(l)l^n\rho_1(l)\mf c_{l}\mf c_{l}^T\sigma_1(l),
	\end{align*}
	which can be represented as $\partial_n\mf C_1=2\bs\Lambda^n\mf C_1$.
From the definition in \eqref{def-V-C1},  $\mf C_1$ is a symmetric matrix of the following form
	\begin{align*}
		\mf C_1=\int_{D_1}\mu_1(l)\rho_1(l)\sigma_1(l)~
		\left[\begin{matrix}
			& \vdots & \vdots & \vdots & \vdots & \vdots &  \\
			\cdots & l^{-4} & l^{-3} & l^{-2} & l^{-1} & 1 & \cdots \\
			\cdots & l^{-3} & l^{-2} & l^{-1} & 1 & l & \cdots \\
			\cdots & l^{-2} & l^{-1} & 1 & l & l^{2} & \cdots \\
			\cdots & l^{-1} & 1 & l & l^{2} & l^{3} & \cdots \\
			\cdots & 1 & l & l^{2} & l^{3} & l^{4} & \cdots \\
			& \vdots & \vdots & \vdots & \vdots & \vdots &
		\end{matrix}\right],
	\end{align*}
which means $\partial_n\mf C_1$ is a symmetric matrix as well,
and so is $\bs\Lambda^n\mf C_1$.
Thus we have
\begin{equation}\label{LC}
\bs\Lambda^n\mf C_1=(\bs\Lambda^n\mf C_1)^T=\mf C_1(\bs\Lambda^T)^n,
\end{equation}
and consequently,
\begin{equation}\label{partial-C1}
		\partial_n\mf C_1=2\bs\Lambda^n\mf C_1=\bs\Lambda^n\mf C_1+\mf C_1(\bs\Lambda^T)^n.
\end{equation}
	There is a similar result for $\mf C_2$, which is
	\begin{equation}\label{partial-C2}
		\partial_{n}\mf C_2=-\bs\Lambda^n\mf C_2-\mf C_2(\bs\Lambda^T)^n.
	\end{equation}

Next, from \eqref{U-rep} we have
\begin{equation}\label{U-rep-2}
\mf U^{-1}= \mf C_1^{-1}-\bs\Omega\mf C_2\bs\Omega.
\end{equation}
Taking  $x_n$-derivative of the above $\mf U^{-1}$, one obtains
\begin{align*}
\partial_{n}\mf U^{-1}=-\mf U^{-1}(\partial_{n}\mf U)\mf U^{-1}
=-\mf C_1^{-1}(\partial_{n}\mf C_1 )\mf C_1^{-1}-\bs\Omega(\partial_n\mf C_2)\bs\Omega,
\end{align*}
which gives rise to
\begin{equation}\label{partial-U-half}
\partial_n\mf U=\mf U\mf C_1^{-1}(\partial_n\mf C_1)\mf C_1^{-1}\mf U
                        +\mf U\bs\Omega(\partial_n\mf C_2)\bs\Omega\mf U.
\end{equation}
Then, substituting \eqref{partial-C1} and \eqref{partial-C2} into \eqref{partial-U-half}
and employing the notation $\mf O_n$ as defined in \eqref{3.16}, we have
\begin{align*}
\partial_n\mf U
		&=\mf U\mf C_1^{-1}\bs\Lambda^n\mf U+\mf U(\bs\Lambda^T)^n\mf C_1\mf U
             -\mf U\bs\Omega\bs\Lambda^n\mf C_2\bs\Omega\mf U
             -\mf U\bs\Omega\mf C_2(\bs\Lambda^T)^n\bs\Omega\mf U \notag \\
		&=\mf U\mf C_1^{-1}\bs\Lambda^n\mf U+\mf U(\bs\Lambda^T)^n\mf C_1\mf U
             +\mf U(\mf O_n-(\bs\Lambda^T)^n\bs\Omega)\mf C_2\bs\Omega\mf U
             -\mf U\bs\Omega\mf C_2(\mf O_n+\bs\Omega\bs\Lambda^n)\mf U \notag \\
		&=\mf U(\mf C_1^{-1}-\bs\Omega\mf C_2\bs\Omega)\bs\Lambda^n\mf U
             +\mf U(\bs\Lambda^T)^n(\mf C_1-\bs\Omega\mf C_2\bs\Omega)\mf U
             +\mf U\mf O_n\mf C_2\bs\Omega\mf U-\mf U\bs\Omega\mf C_2\mf O_n\mf U \notag \\
		&=\bs\Lambda^n\mf U+\mf U(\bs\Lambda^T)^n+\mf U\mf O_n\mf C_2\bs\Omega\mf U
             -\mf U\bs\Omega\mf C_2\mf O_n\mf U\\
        &=\bs\Lambda^n\mf U+\mf U(\bs\Lambda^T)^n-\mf U\mf O_n\overline{\mf U}^T
             -\mf U\bs\Omega\overline{\mf U},
\end{align*}
where $\overline{\mf U}$ is defined in \eqref{3.16}
and we have made use of the symmetric relations $\mf U=\mf U^T, \mf C_2=\mf C_2^T$
and $\bs\Omega=-\bs\Omega^T$.
In addition, $\overline{\mf U}$ and $\overline{\mf V}$ are connected by
\begin{equation}\label{UV-bar}
\overline{\mf U}^T=-\overline{\mf V},
\end{equation}
which can be seen from
\begin{align*}
\overline{\mf U}^T=(-\mf U\bs\Omega\mf C_2)^T
=-(((\mf C_2\bs\Omega\mf C_1)^{-1}-\bs\Omega)^{-1})^T
=((\mf C_1\bs\Omega\mf C_2)^{-1}-\bs\Omega)^{-1}=\mf V\bs\Omega\mf C_1=-\overline{\mf V}.
\end{align*}
Thus, we are finally led to \eqref{evo-U}.
Another evolution \eqref{evo-V} can be derived through a similar procedure, and we skip it.

\end{proof}

A direct consequence of the above Proposition is the following.

\begin{proposition}\label{prop-2-coro-1}
The derivatives of $\overline{\mf U}$ and $\overline{\mf V}$ are given by
\bsb\label{evo-bar-UV}
\begin{align}
		\label{evo-bar-U}\partial_n\overline{\mf U}&=\bs\Lambda^n\overline{\mf U}
-\overline{\mf U}(\bs\Lambda^T)^n+\overline{\mf U}\mf O_n\overline{\mf U} -\mf U\mf O_n\mf V, \\
		\label{evo-bar-V}\partial_n\overline{\mf V}&=-\bs\Lambda^n\overline{\mf V}
+\overline{\mf V}(\bs\Lambda^T)^n-\overline{\mf V}\mf O_n\overline{\mf V}+\mf V\mf O_n\mf U.
	\end{align}
\esb
\end{proposition}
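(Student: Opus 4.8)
The plan is to differentiate the defining relation $\overline{\mf U}=-\mf U\bs\Omega\mf C_2$ from \eqref{3.16} directly. Since $\bs\Omega$ is built from the constant matrices $\bs\Lambda$ and $\mf O$, it carries no $x_n$-dependence, so the product rule gives $\partial_n\overline{\mf U}=-(\partial_n\mf U)\bs\Omega\mf C_2-\mf U\bs\Omega(\partial_n\mf C_2)$. Both ingredients are already in hand: the evolution \eqref{evo-U} for $\partial_n\mf U$ and the relation $\partial_n\mf C_2=-\bs\Lambda^n\mf C_2-\mf C_2(\bs\Lambda^T)^n$ from \eqref{partial-C2}. The only nontrivial work is then to substitute these and reorganize the resulting eight terms into the four terms of \eqref{evo-bar-U}.

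The reorganization proceeds by repeatedly invoking the identity $\mf U\bs\Omega\mf C_2=-\overline{\mf U}$. From $-(\partial_n\mf U)\bs\Omega\mf C_2$, the $\bs\Lambda^n\mf U$ piece turns into $\bs\Lambda^n\overline{\mf U}$ and the $\overline{\mf U}\mf O_n\mf U$ piece into $\overline{\mf U}\mf O_n\overline{\mf U}$, giving two of the target terms; from $-\mf U\bs\Omega(\partial_n\mf C_2)$ the $-\mf C_2(\bs\Lambda^T)^n$ contribution yields $-\overline{\mf U}(\bs\Lambda^T)^n$, a third. What remains are three leftover pieces, namely $-\mf U(\bs\Lambda^T)^n\bs\Omega\mf C_2$ and $\mf U\bs\Omega\bs\Lambda^n\mf C_2$ (coming from the $\mf U(\bs\Lambda^T)^n$ term of \eqref{evo-U} and the $-\bs\Lambda^n\mf C_2$ term of \eqref{partial-C2}), together with $\mf U\mf O_n\overline{\mf V}\bs\Omega\mf C_2$.

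The crucial simplification is that these three leftovers collapse to exactly $-\mf U\mf O_n\mf V$. The first two combine, through the definition $\mf O_n=(\bs\Lambda^T)^n\bs\Omega-\bs\Omega\bs\Lambda^n$ in \eqref{3.16}, into $-\mf U\mf O_n\mf C_2$; factoring $-\mf U\mf O_n$ out of all three then leaves $-\mf U\mf O_n(\mf I-\overline{\mf V}\bs\Omega)\mf C_2$. Because $\overline{\mf V}=-\mf V\bs\Omega\mf C_1$, the bracket equals $(\mf I+\mf V\bs\Omega\mf C_1\bs\Omega)\mf C_2$, which is precisely $\mf V$ by the representation \eqref{V-rep}. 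This is the step I expect to be the main obstacle: not the algebra itself, but recognizing that the residual terms are engineered to reproduce the integral-equation identity \eqref{V-rep}. Assembling the four surviving pieces gives \eqref{evo-bar-U}.

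For \eqref{evo-bar-V} I would rather avoid repeating the whole computation and instead transpose \eqref{evo-bar-U}. Since $\partial_n$ commutes with transposition, and since $\mf U=\mf U^T$, $\mf V=\mf V^T$ (Proposition \ref{prop-1}), $\overline{\mf U}^T=-\overline{\mf V}$ from \eqref{UV-bar}, and $\mf O_n^T=\mf O_n$ (an immediate consequence of $\bs\Omega^T=-\bs\Omega$), transposing \eqref{evo-bar-U} sends each term to the corresponding term of \eqref{evo-bar-V}, completing the proof. Alternatively, one could run the identical direct argument starting from $\overline{\mf V}=-\mf V\bs\Omega\mf C_1$, substituting \eqref{evo-V} and \eqref{partial-C1} and collapsing the residual terms via \eqref{U-rep}.
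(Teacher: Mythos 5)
Your proof is correct, and it is essentially the argument the paper intends: the paper states this proposition without proof as ``a direct consequence'' of Proposition \ref{prop-2}, and your computation (differentiating $\overline{\mf U}=-\mf U\bs\Omega\mf C_2$, substituting \eqref{evo-U} and \eqref{partial-C2}, and collapsing the residual terms via \eqref{V-rep}) supplies exactly that missing derivation. The transposition argument for \eqref{evo-bar-V} via \eqref{UV-bar} and the symmetry $\mf O_n^T=\mf O_n$ is also sound.
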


The infinite-dimensional matrices $\mf U, \mf V, \overline{\mf U}$ and $\overline{\mf U}$
are also connected without involving derivatives.

\begin{proposition}\label{prop-3}
$\mf U$ and $\mf V$ satisfy the following relations
\bsb\label{diff-uv-1}
\begin{align}
	\mf U(\bs\Lambda^T)^n-\bs\Lambda^n\mf U&=\overline{\mf U}\mf O_n\mf U+\mf U\mf O_n\overline{\mf V}, \\
	\mf V(\bs\Lambda^T)^n-\bs\Lambda^n\mf V&=\overline{\mf V}\mf O_n\mf V+\mf V\mf O_n\overline{\mf U}.
\end{align}
\esb
\end{proposition}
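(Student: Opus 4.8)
The plan is to mimic the sandwich strategy used for the evolution relations in Proposition~\ref{prop-2}, but to replace the role of the $x_n$-derivative by the (twisted) commutator with the shift matrix $\bs\Lambda$. The starting point is the representation $\mf U^{-1}=\mf C_1^{-1}-\bs\Omega\mf C_2\bs\Omega$ from \eqref{U-rep-2}. Writing $\mf K_n\doteq\bs\Lambda^n\mf U-\mf U(\bs\Lambda^T)^n$ for the negative of the quantity to be computed, I first conjugate by $\mf U^{-1}$, which produces the clean identity
\begin{equation*}
\mf U^{-1}\mf K_n\mf U^{-1}=\mf U^{-1}\bs\Lambda^n-(\bs\Lambda^T)^n\mf U^{-1}.
\end{equation*}
Thus the task reduces to evaluating this right-hand side in terms of $\mf C_1,\mf C_2$ and $\bs\Omega$, and then sandwiching back by $\mf U$.

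Next I would exploit the symmetric (Toeplitz-type) structure of the $\mf C_j$ encoded in \eqref{LC}. From $\bs\Lambda^n\mf C_1=\mf C_1(\bs\Lambda^T)^n$, multiplying by $\mf C_1^{-1}$ on both sides gives the twisted commutation $\mf C_1^{-1}\bs\Lambda^n=(\bs\Lambda^T)^n\mf C_1^{-1}$, so the $\mf C_1^{-1}$-contributions to $\mf U^{-1}\bs\Lambda^n-(\bs\Lambda^T)^n\mf U^{-1}$ cancel exactly. The whole right-hand side therefore collapses to the $\bs\Omega\mf C_2\bs\Omega$-part,
\begin{equation*}
\mf U^{-1}\bs\Lambda^n-(\bs\Lambda^T)^n\mf U^{-1}
=(\bs\Lambda^T)^n\bs\Omega\mf C_2\bs\Omega-\bs\Omega\mf C_2\bs\Omega\bs\Lambda^n.
\end{equation*}

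The crucial step is to rewrite this purely in terms of $\mf O_n=(\bs\Lambda^T)^n\bs\Omega-\bs\Omega\bs\Lambda^n$. I would substitute $(\bs\Lambda^T)^n\bs\Omega=\mf O_n+\bs\Omega\bs\Lambda^n$ in the first summand, then push $\bs\Lambda^n$ through $\mf C_2$ using $\bs\Lambda^n\mf C_2=\mf C_2(\bs\Lambda^T)^n$ (the $\mf C_2$-analogue of \eqref{LC}), and finally apply $(\bs\Lambda^T)^n\bs\Omega=\mf O_n+\bs\Omega\bs\Lambda^n$ once more so that the leftover $\bs\Omega\mf C_2\bs\Omega\bs\Lambda^n$ telescopes against the second summand. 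This yields the compact expression $\mf O_n\mf C_2\bs\Omega+\bs\Omega\mf C_2\mf O_n$. I expect this rearrangement to be the main bookkeeping obstacle, since one must track how $\mf O_n$ is generated twice while the shift operators commute past $\mf C_2$ but only skew-commute with $\bs\Omega$.

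Finally, sandwiching back by $\mf U$ gives $\mf K_n=\mf U\mf O_n\mf C_2\bs\Omega\mf U+\mf U\bs\Omega\mf C_2\mf O_n\mf U$, so that
\begin{equation*}
\mf U(\bs\Lambda^T)^n-\bs\Lambda^n\mf U=-\mf K_n
=-\mf U\mf O_n(\mf C_2\bs\Omega\mf U)-(\mf U\bs\Omega\mf C_2)\mf O_n\mf U.
\end{equation*}
It remains only to identify the two bracketed factors with the barred matrices: directly from \eqref{3.16}, $\mf U\bs\Omega\mf C_2=-\overline{\mf U}$, while $\mf C_2\bs\Omega\mf U=\overline{\mf U}^T=-\overline{\mf V}$ using $\mf U=\mf U^T$, $\mf C_2=\mf C_2^T$, $\bs\Omega^T=-\bs\Omega$ together with the relation \eqref{UV-bar}. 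Substituting these gives precisely $\overline{\mf U}\mf O_n\mf U+\mf U\mf O_n\overline{\mf V}$, establishing the first identity. The second identity follows verbatim under the interchange $\mf C_1\leftrightarrow\mf C_2$, $\mf U\leftrightarrow\mf V$, $\overline{\mf U}\leftrightarrow\overline{\mf V}$, which carries $\mf U^{-1}=\mf C_1^{-1}-\bs\Omega\mf C_2\bs\Omega$ into $\mf V^{-1}=\mf C_2^{-1}-\bs\Omega\mf C_1\bs\Omega$ and leaves $\mf O_n$ unchanged.
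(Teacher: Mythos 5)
Your argument is correct, and it uses exactly the same ingredients as the paper's proof: the representation $\mf U^{-1}=\mf C_1^{-1}-\bs\Omega\mf C_2\bs\Omega$, the twisted commutation $\bs\Lambda^n\mf C_j=\mf C_j(\bs\Lambda^T)^n$ from \eqref{LC}, the definition of $\mf O_n$, and the identifications $\mf U\bs\Omega\mf C_2=-\overline{\mf U}$, $\mf C_2\bs\Omega\mf U=-\overline{\mf V}$ via \eqref{UV-bar}. The only difference is direction: you derive the right-hand side forward by conjugating with $\mf U^{-1}$, whereas the paper verifies that the combination $\bs\Lambda^n\mf U-\mf U(\bs\Lambda^T)^n+\overline{\mf U}\mf O_n\mf U+\mf U\mf O_n\overline{\mf V}$ collapses to zero; this is the same computation read in reverse.
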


\begin{proof}
	We only prove the first case.
	The second one can be proved  similarly.
	Direct computations yield
\begin{align*}
	&~\bs\Lambda^n\mf U-\mf U(\bs\Lambda^T)^n+\overline{\mf U}\mf O_n\mf U+\mf U\mf O_n\overline{\mf V} \\
  =&~\bs\Lambda^n\mf U-\mf U(\bs\Lambda^T)^n
      -\overline{\mf U}(\bs\Omega\bs\Lambda^n
      -(\bs\Lambda^T)^n\bs\Omega)\mf U
		-\mf U(\bs\Omega\bs\Lambda^n
      -(\bs\Lambda^T)^n\bs\Omega)\overline{\mf V} \\
  =&~(\mf I-\overline{\mf U}\bs\Omega)\bs\Lambda^n\mf U
        -\mf U(\bs\Lambda^T)^n(\mf I-\bs\Omega\overline{\mf V})
		+\overline{\mf U}(\bs\Lambda^T)^n\bs\Omega\mf U
        -\mf U\bs\Omega\bs\Lambda^n\overline{\mf V} \\
 =&~\mf U\mf C_1^{-1}\bs\Lambda^n\mf U
        -\mf U(\bs\Lambda^T)^n\mf C_1^{-1}\mf U
		+\mf U\bs\Omega\bs\Lambda^n\mf C_2\bs\Omega\mf U
        -\mf U\bs\Omega\mf C_2(\bs\Lambda^T)^n\bs\Omega\mf U \\
 =&~\mf U\mf C_1^{-1}(\bs\Lambda^n\mf C_1-\mf C_1(\bs\Lambda^T)^n)\mf C_1^{-1}\mf U
       +\mf U\bs\Omega(\bs\Lambda^n\mf C_2-\mf C_2(\bs\Lambda^T)^n)\bs\Omega\mf U\\
 =&~\bs0.
\end{align*}
	This completes the proof.
\end{proof}

There are dual relations which can be proved in a similar way.
\begin{proposition}\label{prop-3-coro-2}
	$\overline{\mf U}$ and $\overline{\mf V}$ satisfy the following relations:
	\bsb\label{diff-uv-bar}
	\begin{align}
		\overline{\mf U}(\bs\Lambda^T)^n-\bs\Lambda^n\overline{\mf U}
&=\overline{\mf U}\mf O_n\overline{\mf U}+\mf U\mf O_n\mf V, \\
		\overline{\mf V}(\bs\Lambda^T)^n-\bs\Lambda^n\overline{\mf V}
&=\overline{\mf V}\mf O_n\overline{\mf V}+\mf V\mf O_n\mf U.
	\end{align}
	\esb
\end{proposition}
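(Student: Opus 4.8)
The plan is to establish the first relation in \eqref{diff-uv-bar} by the same mechanism used for Proposition \ref{prop-3}, namely by showing that the combination
\[
\bs\Lambda^n\overline{\mf U}-\overline{\mf U}(\bs\Lambda^T)^n+\overline{\mf U}\mf O_n\overline{\mf U}+\mf U\mf O_n\mf V
\]
vanishes identically; the second relation then follows by the $\mf U\leftrightarrow\mf V$, $\mf C_1\leftrightarrow\mf C_2$, $\overline{\mf U}\leftrightarrow\overline{\mf V}$ symmetry of the construction. As in the proof of Proposition \ref{prop-3}, the first move is to insert $\mf O_n=(\bs\Lambda^T)^n\bs\Omega-\bs\Omega\bs\Lambda^n$ from \eqref{3.16}, expanding the two quadratic terms $\overline{\mf U}\mf O_n\overline{\mf U}$ and $\mf U\mf O_n\mf V$ into products in which a single power of $\bs\Lambda$ or $\bs\Lambda^T$ sits between the matrix factors.

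Before carrying out the collapse I would record the algebraic identities that make it work. From \eqref{U-rep-2} together with $\overline{\mf U}=-\mf U\bs\Omega\mf C_2$ one gets the absorption relation $\mf U\mf C_1^{-1}=\mf I-\overline{\mf U}\bs\Omega$, and from $\mf V^{-1}=\mf C_2^{-1}-\bs\Omega\mf C_1\bs\Omega$ (see \eqref{V-rep}) together with the dual form $\overline{\mf U}=-\mf C_1\bs\Omega\mf V$ one gets $\mf C_2^{-1}\mf V=\mf I-\bs\Omega\overline{\mf U}$. The dual forms $\overline{\mf U}=-\mf C_1\bs\Omega\mf V$ and $\overline{\mf V}=-\mf C_2\bs\Omega\mf U$ are themselves immediate from the symmetry of $\mf U,\mf V,\mf C_1,\mf C_2$, the relation $\bs\Omega^T=-\bs\Omega$, and \eqref{UV-bar}. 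Finally I would use the Hankel commutation $\bs\Lambda^n\mf C_j=\mf C_j(\bs\Lambda^T)^n$ for $j=1,2$ (the analogue of \eqref{LC}, valid because each $\mf C_j$ is constant along anti-diagonals).

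The collapse then proceeds in two groupings. The pair $\bs\Lambda^n\overline{\mf U}-\overline{\mf U}\bs\Omega\bs\Lambda^n\overline{\mf U}$ combines into $(\mf I-\overline{\mf U}\bs\Omega)\bs\Lambda^n\overline{\mf U}=\mf U\mf C_1^{-1}\bs\Lambda^n\overline{\mf U}$, and after substituting $\overline{\mf U}=-\mf C_1\bs\Omega\mf V$ on the right and applying $\mf C_1^{-1}\bs\Lambda^n\mf C_1=(\bs\Lambda^T)^n$ this reduces to $-\mf U(\bs\Lambda^T)^n\bs\Omega\mf V$. Symmetrically, $-\overline{\mf U}(\bs\Lambda^T)^n+\overline{\mf U}(\bs\Lambda^T)^n\bs\Omega\overline{\mf U}=-\overline{\mf U}(\bs\Lambda^T)^n(\mf I-\bs\Omega\overline{\mf U})=-\overline{\mf U}(\bs\Lambda^T)^n\mf C_2^{-1}\mf V$, and substituting $\overline{\mf U}=-\mf U\bs\Omega\mf C_2$ on the left together with $\mf C_2(\bs\Lambda^T)^n\mf C_2^{-1}=\bs\Lambda^n$ yields $+\mf U\bs\Omega\bs\Lambda^n\mf V$. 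The two terms coming from $\mf U\mf O_n\mf V$ are $\mf U(\bs\Lambda^T)^n\bs\Omega\mf V-\mf U\bs\Omega\bs\Lambda^n\mf V$, which cancel the previous two in pairs, giving $\bs 0$.

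I expect the only real obstacle to be the bookkeeping: at each occurrence of $\overline{\mf U}$ one must choose the correct one of the two equivalent forms ($-\mf U\bs\Omega\mf C_2$ versus $-\mf C_1\bs\Omega\mf V$) so that a factor $\mf C_1$ or $\mf C_2$ ends up adjacent to the shift $\bs\Lambda^n$ and the commutation relation can turn $\mf C_1^{-1}\bs\Lambda^n\mf C_1$ into $(\bs\Lambda^T)^n$, and likewise for $\mf C_2$. Making these choices consistently is precisely what funnels all four surviving terms into the two cancelling pairs; everything else is routine. The relation for $\overline{\mf V}$ requires no new idea and is obtained by the stated symmetry.
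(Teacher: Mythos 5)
Your computation is correct and is exactly the ``similar way'' the paper alludes to: the paper omits the proof of Proposition~\ref{prop-3-coro-2}, stating only that it parallels Proposition~\ref{prop-3}, and your argument carries out that parallel verification correctly, using the dual representations $\overline{\mf U}=-\mf U\bs\Omega\mf C_2=-\mf C_1\bs\Omega\mf V$, the absorption identities $\mf I-\overline{\mf U}\bs\Omega=\mf U\mf C_1^{-1}$ and $\mf I-\bs\Omega\overline{\mf U}=\mf C_2^{-1}\mf V$, and the Hankel commutation \eqref{LC}. Both groupings collapse as you claim and the symmetry argument for the second relation is valid, so nothing further is needed.
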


Here we insert a remark on $\mf O_n$.
\begin{remark}\label{rem-2}
The definition of $\mf O_n$ is an extension of \eqref{OOmega} if one regards $\mf O$ as $\mf O_1$,
and there is a recurrence relation for $\{\mf O_j\}$
\begin{equation}\label{rec-Oij}
	\mf O_{i+j}=\mf O_i\bs\Lambda^j+(\bs\Lambda^T)^i\mf O_j,~~~~(i,j\in\mb Z).
\end{equation}
Making use of this relation,
with appropriate computations,
one can obtain another expression of $\mf O_n$ represented by $\bs\Lambda$ and $\mf O$:
\begin{equation}\label{rep-on}
	\mf O_n = \left\{
	\begin{array}{lll}
		\sum_{l=0}^{n-1}(\bs\Lambda^T)^{n-1-l}\mf O\bs\Lambda^{l}, && n\in\mb Z^+,\\
		\bs0, && n=0,\\
		-\sum_{l=-1}^{n}(\bs\Lambda^T)^{n-1-l}\mf O\bs\Lambda^{l}, && n\in\mb Z^-.
	\end{array}\right.
\end{equation}
\end{remark}

Note that the results in Proposition \ref{prop-2}, \ref{prop-2-coro-1}, \ref{prop-3} and
\ref{prop-3-coro-2} can be alternatively expressed by introducing the following
formal $2\times 2$ block matrix
\begin{equation}\label{def-T}
	\mk T=
	\left[
	\begin{array}{cc}
		\overline{\mf V} &    \mf V    \\
		\mf U  &   \overline{\mf U}
	\end{array}
	\right],
\end{equation}
where $\mf U, \mf V, \overline{\mf U}, \overline{\mf V}$ are the infinite-dimensional matrices defined previously.

\begin{proposition}\label{prop-4}
The evolution of $\mk T$ is
	\begin{equation}\label{evo-T}
		\partial_n\mk T=\mk T (\mk A^{T})^n\mathcal{C}-\mk C\mk A^n \mk T-\mk T \mk C\mk O_n \mk T.
	\end{equation}
    In addition, $\mk T$ satisfies the `difference' relation:
	\begin{equation}\label{recur-1}
		\mk T(\mk A^{T})^n-\mk A^n\mk T=\mk T\mk O_n\mk T.
	\end{equation}
    Here $\mk C, \mk A$ and $\mk O_n$ are all formal block diagonal matrices:
	\begin{equation}\label{CAO}
		\mk C\doteq
		\left[
		\begin{array}{cc}
			\mf I &        \\
			&   -\mf I
		\end{array}
		\right],~~~~
		\mk A\doteq
		\left[
		\begin{array}{cc}
			\bs\Lambda &        \\
			&   \bs\Lambda
		\end{array}
		\right],~~~~
		\mk O_n\doteq
		\left[
		\begin{array}{cc}
			\mf O_n &        \\
			&   \mf O_n
		\end{array}
		\right].
	\end{equation}

\end{proposition}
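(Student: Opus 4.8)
The plan is to recognize Proposition \ref{prop-4} as a purely organizational repackaging of the four block relations already established in Propositions \ref{prop-2}, \ref{prop-2-coro-1}, \ref{prop-3} and \ref{prop-3-coro-2}. Since $\mk C$, $\mk A$ and $\mk O_n$ are all block-diagonal, multiplying $\mk T$ by them on either side merely rescales, shifts, or reverses the sign of the four entries $\overline{\mf V},\mf V,\mf U,\overline{\mf U}$ without ever mixing them. Consequently both claimed identities are equivalent to four scalar-block equalities, obtained by reading off the $(1,1)$, $(1,2)$, $(2,1)$ and $(2,2)$ positions, and each of these is one of the relations proved earlier. So the whole proof is a bookkeeping verification rather than a new computation.

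First I would treat the evolution relation \eqref{evo-T} by expanding its right-hand side block by block. The term $\mk T(\mk A^{T})^n\mk C$ places $(\bs\Lambda^T)^n$ on the right of each entry, carrying a minus sign in the second block-column due to the $-\mf I$ in $\mk C$; the term $\mk C\mk A^n\mk T$ places $\bs\Lambda^n$ on the left, with the minus sign now in the second block-row; and the quadratic term $\mk T\mk C\mk O_n\mk T$ produces, in each slot, precisely the two $\mf O_n$-bilinear contributions of the corresponding evolution formula, the internal sign again being supplied by $\mk C$. Matching against $\partial_n\mk T$, whose four entries are $\partial_n\overline{\mf V},\partial_n\mf V,\partial_n\mf U,\partial_n\overline{\mf U}$, I expect the $(1,1)$ entry to reproduce \eqref{evo-bar-V}, the $(1,2)$ entry \eqref{evo-V}, the $(2,1)$ entry \eqref{evo-U}, and the $(2,2)$ entry \eqref{evo-bar-U}; hence \eqref{evo-T} holds entrywise.

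The difference relation \eqref{recur-1} is even cleaner, as neither $\mk C$ nor any derivative enters. Expanding $\mk T(\mk A^{T})^n-\mk A^n\mk T$ gives the four entries $\overline{\mf V}(\bs\Lambda^T)^n-\bs\Lambda^n\overline{\mf V}$, $\mf V(\bs\Lambda^T)^n-\bs\Lambda^n\mf V$, $\mf U(\bs\Lambda^T)^n-\bs\Lambda^n\mf U$ and $\overline{\mf U}(\bs\Lambda^T)^n-\bs\Lambda^n\overline{\mf U}$, while $\mk T\mk O_n\mk T$ yields $\overline{\mf V}\mf O_n\overline{\mf V}+\mf V\mf O_n\mf U$, $\overline{\mf V}\mf O_n\mf V+\mf V\mf O_n\overline{\mf U}$, $\mf U\mf O_n\overline{\mf V}+\overline{\mf U}\mf O_n\mf U$ and $\mf U\mf O_n\mf V+\overline{\mf U}\mf O_n\overline{\mf U}$ in the same positions. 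These are exactly the second line of \eqref{diff-uv-bar}, the second line of \eqref{diff-uv-1}, the first line of \eqref{diff-uv-1} and the first line of \eqref{diff-uv-bar}, respectively, so \eqref{recur-1} follows at once.

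There is no conceptual obstacle; the only step demanding genuine care is the sign bookkeeping forced by the $-\mf I$ block of $\mk C$ in \eqref{evo-T}. One must confirm that the sign attached to each entry of $\mk T(\mk A^{T})^n\mk C$ and of $\mk C\mk A^n\mk T$, together with the internal sign inside the cross terms of $\mk T\mk C\mk O_n\mk T$, conspire to reproduce exactly the signs recorded in \eqref{evo-UV} and \eqref{evo-bar-UV}. Once the diagonal and off-diagonal positions are correctly aligned with their four source relations, both \eqref{evo-T} and \eqref{recur-1} are immediate.
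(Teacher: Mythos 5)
Your proposal is correct and matches the paper's own proof: both verify \eqref{evo-T} and \eqref{recur-1} entrywise by identifying each of the four blocks with the corresponding relation in Propositions \ref{prop-2}--\ref{prop-3-coro-2}, with the $-\mf I$ block of $\mk C$ supplying the sign differences. Your block-by-block correspondences (including the placement of $\mk C$ inside the quadratic term $\mk T\mk C\mk O_n\mk T$) agree exactly with the paper's computation.
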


\begin{proof}
Substituting \eqref{evo-UV} and \eqref{evo-bar-UV} into
	\begin{align}
		\partial_n\mk T=&
		\left[
		\begin{array}{cc}
			\partial_n\overline{\mf V} &    \partial_n\mf V    \\
			\partial_n\mf U  &   \partial_n\overline{\mf U}
		\end{array}
		\right],
	\end{align}
	we have
	\begin{align*}
		\partial_n\mk T=&
		\left[
		\begin{array}{cc}
			-\bs\Lambda^n\overline{\mf V}+\overline{\mf V}(\bs\Lambda^T)^n
            -\overline{\mf V}\mf O_n\overline{\mf V}+\mf V\mf O_n\mf U
            &    -\bs\Lambda^n\mf V-\mf V(\bs\Lambda^T)^n-\overline{\mf V}\mf O_n\mf V
                  +\mf V\mf O_n\overline{\mf U}    \\
			\bs\Lambda^n\mf U+\mf U(\bs\Lambda^T)^n+\overline{\mf U}\mf O_n\mf U-\mf U\mf O_n\overline{\mf V}
            &   \bs\Lambda^n\overline{\mf U}-\overline{\mf U}(\bs\Lambda^T)^n
                  +\overline{\mf U}\mf O_n\overline{\mf U} 	-\mf U\mf O_n\mf V
		\end{array}
		\right] \\
		=&\left[
		\begin{array}{cc}
			-\bs\Lambda^n\overline{\mf V}+\overline{\mf V}(\bs\Lambda^T)^n
          &    -\bs\Lambda^n\mf V-\mf V(\bs\Lambda^T)^n    \\
			\bs\Lambda^n\mf U+\mf U(\bs\Lambda^T)^n
          &   \bs\Lambda^n\overline{\mf U}-\overline{\mf U}(\bs\Lambda^T)^n
		\end{array}
		\right]
		-\left[
		\begin{array}{cc}
			\overline{\mf V} &    \mf V    \\
			\mf U  &   \overline{\mf U}
		\end{array}
		\right]\left[
		\begin{array}{cc}
			\mf O_n &        \\
			&   -\mf O_n
		\end{array}
		\right]\left[
		\begin{array}{cc}
			\overline{\mf V} &    \mf V    \\
			\mf U  &   \overline{\mf U}
		\end{array}
		\right] \\
		=&~\mk T
		\left[
		\begin{array}{cc}
			(\bs\Lambda^T)^n &      \\
			&   -(\bs\Lambda^T)^n
		\end{array}
		\right]-
		\left[
		\begin{array}{cc}
			\bs\Lambda^n &      \\
			&   -\bs\Lambda^n
		\end{array}
		\right] \mk T-\mk T
		\left[
		\begin{array}{cc}
			\mf O_n &        \\
			&   -\mf O_n
		\end{array}
		\right] \mk T \\
		=&~\mk T (\mk A^{T})^n\mk C-\mk C\mk A^n \mk T-\mk T \mk C\mk O_n \mk T,
	\end{align*}
which completes the proof for \eqref{evo-T}.

	For the relation \eqref{recur-1}, utilizing \eqref{diff-uv-1} and \eqref{diff-uv-bar},
	we can calculate $\mk T(\mk A^T)^n-\mk A^n\mk T$ to get
	\begin{align*}
		\mk T(\mk A^{T})^n-\mk A^n\mk T
      &=\left[
		\begin{array}{cc}
			\overline{\mf V}(\bs\Lambda^T)^n-\bs\Lambda^n\overline{\mf V}
      &   \mf V(\bs\Lambda^T)^n-\bs\Lambda^n\mf V     \\
			\mf U(\bs\Lambda^T)^n-\bs\Lambda^n\mf U
      &   \overline{\mf U}(\bs\Lambda^T)^n-\bs\Lambda^n\overline{\mf U}
		\end{array}
		\right] \\
		&=
		\left[
		\begin{array}{cc}
			\overline{\mf V} &    \mf V    \\
			\mf U  &   \overline{\mf U}
		\end{array}
		\right]\left[
		\begin{array}{cc}
			\mf O_n &        \\
			&   \mf O_n
		\end{array}
		\right]
		\left[
		\begin{array}{cc}
			\overline{\mf V} &    \mf V    \\
			\mf U  &   \overline{\mf U}
		\end{array}
		\right]
		=\mk T\mk O_n\mk T.
	\end{align*}
	This completes the proof for \eqref{recur-1}.

\end{proof}

The relations \eqref{evo-T} and \eqref{recur-1} allow us to build up the relation
between $\partial_{n+s}\mk T$ and $\partial_{n}\mk T$,
which provides a differential recurrence for $\mk T$.

\begin{theorem}\label{Th-1}
The following differential recurrence relation of $\mk T$ holds,
\begin{equation}\label{diff-re-T}
(\partial_{n+s}\mk T)(\mk A^{\mr T})^{-s}+(\partial_n\mk T) \mk O_s \mk T(\mk A^{\mr T})^{-s}
=\partial_n\mk T,~~~
n,s\in\mb Z, ~~s\neq0.
\end{equation}
\end{theorem}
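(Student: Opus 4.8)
The plan is to multiply \eqref{diff-re-T} on the right by $(\mk A^{\mr T})^{s}$, which (using $(\mk A^{\mr T})^{-s}(\mk A^{\mr T})^{s}=\mk I$) turns the claim into the equivalent identity
\begin{equation*}
\partial_{n+s}\mk T=(\partial_n\mk T)(\mk A^{\mr T})^{s}-(\partial_n\mk T)\mk O_s\mk T,
\end{equation*}
so it suffices to establish this. The only inputs I expect to need are the evolution relation \eqref{evo-T}, the difference relation \eqref{recur-1}, the block version of the recurrence \eqref{rec-Oij}, namely $\mk O_{n+s}=\mk O_n\mk A^{s}+(\mk A^{\mr T})^{n}\mk O_s$, and the fact that $\mk C$ commutes with $\mk A$, $\mk A^{\mr T}$ and $\mk O_n$ since all of them are block diagonal by \eqref{CAO}.

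First I would expand the left-hand side $\partial_{n+s}\mk T$ via \eqref{evo-T} with $n$ replaced by $n+s$, splitting the cubic term through $\mk O_{n+s}=\mk O_n\mk A^{s}+(\mk A^{\mr T})^{n}\mk O_s$ and factoring $(\mk A^{\mr T})^{n+s}=(\mk A^{\mr T})^{n}(\mk A^{\mr T})^{s}$, $\mk A^{n+s}=\mk A^{n}\mk A^{s}$. This presents $\partial_{n+s}\mk T$ as a sum of four terms: $\mk T(\mk A^{\mr T})^{n}(\mk A^{\mr T})^{s}\mk C$, $-\mk C\mk A^{n}\mk A^{s}\mk T$, $-\mk T\mk C\mk O_n\mk A^{s}\mk T$ and $-\mk T\mk C(\mk A^{\mr T})^{n}\mk O_s\mk T$.

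Next I would expand the two pieces of the right-hand side using \eqref{evo-T} for $\partial_n\mk T$. In the computation of $(\partial_n\mk T)\mk O_s\mk T$ the decisive move is to replace the factor $\mk T\mk O_s\mk T$, wherever it occurs, by $\mk T(\mk A^{\mr T})^{s}-\mk A^{s}\mk T$, which is exactly the difference relation \eqref{recur-1}. After this substitution the terms proportional to $\mk C\mk A^{n}\mk T(\mk A^{\mr T})^{s}$ and to $\mk T\mk C\mk O_n\mk T(\mk A^{\mr T})^{s}$ cancel in pairs against the corresponding contributions of $(\partial_n\mk T)(\mk A^{\mr T})^{s}$, and the survivors are precisely the four terms listed above — once one uses $(\mk A^{\mr T})^{n}\mk C=\mk C(\mk A^{\mr T})^{n}$ to rewrite $\mk T(\mk A^{\mr T})^{n}\mk C\mk O_s\mk T$ as $\mk T\mk C(\mk A^{\mr T})^{n}\mk O_s\mk T$, thereby matching the fourth term.

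I expect the main obstacle to be purely organizational rather than conceptual: the identity closes only if the $\mk O$-recurrence and \eqref{recur-1} are applied to exactly the right factors and the signs carried through $\mk C$ are tracked carefully, since it is solely after these substitutions that all the quadratic-in-$\mk T$ terms cancel. No structural input beyond \eqref{evo-T}, \eqref{recur-1}, \eqref{rec-Oij} and the block-diagonal commutativity of $\mk C$ should be required.
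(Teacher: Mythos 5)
Your proposal is correct and follows essentially the same route as the paper's proof: both expand $\partial_{n+s}\mk T$ and $(\partial_n\mk T)\mk O_s\mk T$ via \eqref{evo-T}, split $\mk O_{n+s}$ with \eqref{rec-Oij}, and substitute \eqref{recur-1} for the factor $\mk T\mk O_s\mk T$ so that the quadratic terms cancel. The only difference is cosmetic --- you clear the $(\mk A^{\mr T})^{-s}$ factors by right-multiplying first, while the paper carries them through the computation.
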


\begin{proof}
First, noticing that from the relation \eqref{rec-Oij} and the notations in \eqref{CAO}, we have
\begin{equation}\label{rec-Ons}
	\mk O_{n+s}=\mk O_n\mk A^s+(\mk A^T)^n\mk O_s, ~~~ n,s\in\mb Z
\end{equation}
and $\mk C\mk A^T=\mk A^T \mk C$.
Then, using \eqref{evo-T} and the above relations, we have
\begin{align*}
(\partial_{n+s}\mk T)(\mk A^{T})^{-s}
	=&\mk T(\mk A^{T})^{n}\mk C-\mk C\mk A^{n+s}\mk T(\mk A^{T})^{-s}
           -\mk T \mk C\mk O_{n+s} \mk T(\mk A^{T})^{-s} \\
	=&\mk T(\mk A^{T})^{n}\mk C-\mk C\mk A^{n+s}\mk T(\mk A^{T})^{-s}
           -\mk T \mk C\mk O_{n} \mk A^s\mk T(\mk A^{T})^{-s}
           -\mk T(\mk A^{T})^n \mk C\mk O_{s} \mk T(\mk A^{T})^{-s}.
\end{align*}
As for the term $(\partial_n\mk T) \mk O_s \mk T(\mk A^{\mr T})^{-s}$,
we can substitute \eqref{evo-T} into it to obtain:
\begin{align*}
	&~(\partial_n\mk T) \mk O_s \mk T(\mk A^{T})^{-s}  \\
	=&~\mk T(\mk A^{T})^n \mk C\mk O_s \mk T(\mk A^{T})^{-s}
        -\mk C\mk A^n(\mk T\mk O_s\mk T)(\mk A^{T})^{-s}
		-\mk T \mk C\mk O_n (\mk T\mk O_s\mk T)(\mk A^{T})^{-s} \\
	=&~\mk T (\mk A^{T})^n \mk C\mk O_s \mk T(\mk A^{T})^{-s}
        -\mk C\mk A^n\mk T+\mk C\mk A^{n+s}\mk T(\mk A^{T})^{-s}
		-\mk T \mk C\mk O_n \mk T+\mk T \mk C\mk O_n \mk A^s\mk T(\mk A^{T})^{-s},
\end{align*}
where in the last step we have used \eqref{recur-1} to replace the term  $\mk T\mk O_s\mk T$.
Finally, adding them together leads to
\begin{align*}
(\partial_{n+s}\mk T)(\mk A^{T})^{-s}+\partial_n\mk T \mk O_s \mk T(\mk A^{T})^{-s}
=\mk T(\mk A^{T})^{n}\mk C-\mk C\mk A^n\mk T-\mk T \mk C\mk O_n \mk T=\partial_n\mk T,
\end{align*}
where, again, in the final step we used \eqref{evo-T}.

\end{proof}

\subsection{Unreduced ASDYM equation}\label{sec-3-4}

We need to extract some $2\times 2$ matrix relations  from \eqref{diff-re-T}
so that we can obtain a unreduced ASDYM equation, from which one can go further to arrive at the ASDYM equation.
To achieve that, let us introduce a $2\times2$ matrix:
\begin{equation}\label{T-ij}
	\mk T^{(i,j)}=
	 \begin{pmatrix}
		(\overline{\mf V})_{i,j} & (\mf V)_{i,j} \\
		(\mf U)_{i,j} & (\overline{\mf U})_{i,j}
	\end{pmatrix},
\end{equation}
where by $(\mf A)_{i,j}$ we denote a scalar that stands for the $(i,j)$-th component of
the infinite-dimensional matrix $\mf A$.
Thus, from \eqref{diff-re-T} we can extract relations for $\{\mk T^{(i,j)}\}$.

\begin{theorem}\label{Th-2}
For the differential recurrence relation \eqref{diff-re-T} and the $2 \times 2$ matrices $\{\mk T^{(i,j)}\}$,
the following relations hold:
\bsb\label{diff-expand-T}
\begin{align}
\label{diff-expand-T-1}
&\partial_{n+s} \mk T^{(i,j-s)}+\sum_{l=0}^{s-1}(\partial_n\mk T^{(i,s-1-l)}) \mk T^{(l,j-s)}
=\partial_n \mk T^{(i,j)}, ~~~(s\in\mb Z^+), \\
\label{diff-expand-T-2}
&\partial_{n+s} \mk T^{(i,j-s)} -\sum_{l=-1}^{s}(\partial_n\mk T^{(i,s-1-l)}) \mk T^{(l,j-s)}
=\partial_n \mk T^{(i,j)}, ~~~(s\in\mb Z^-).
\end{align}
\esb
\end{theorem}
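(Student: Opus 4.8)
The plan is to obtain \eqref{diff-expand-T} by simply reading off the $(i,j)$ block-component of the matrix identity \eqref{diff-re-T}, where by ``block-component'' I mean the $2\times2$ matrix $\mk T^{(i,j)}$ built from the $(i,j)$-th scalar entry of each of the four infinite-dimensional blocks $\overline{\mf V},\mf V,\mf U,\overline{\mf U}$. The whole computation reduces to understanding how the three factors $(\mk A^T)^{-s}$, $\mk O_s$ and $\mf O$ act at the level of individual entries, so I would set up that dictionary first and then extract entries mechanically.

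First I would record the two elementary facts that drive everything. Since $\mk A=\mathrm{diag}(\bs\Lambda,\bs\Lambda)$ and $(\bs\Lambda^T)_{i,j}=\delta_{i-1,j}$, right multiplication by $(\mk A^T)^{-s}$ acts block-diagonally as $(\bs\Lambda^T)^{-s}$ and merely shifts the column index: for any infinite-dimensional block $\mf M$ one has $(\mf M(\bs\Lambda^T)^{-s})_{i,j}=(\mf M)_{i,j-s}$. Applying this to the first term $(\partial_{n+s}\mk T)(\mk A^T)^{-s}$ of \eqref{diff-re-T} produces exactly $\partial_{n+s}\mk T^{(i,j-s)}$, while the right-hand side of \eqref{diff-re-T} gives $\partial_n\mk T^{(i,j)}$; these match the outer terms of \eqref{diff-expand-T} with no further work.

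The heart of the argument is the middle term $(\partial_n\mk T)\,\mk O_s\,\mk T\,(\mk A^T)^{-s}$. Here I would exploit that $\mk O_s=\mathrm{diag}(\mf O_s,\mf O_s)$ is block-scalar, so in the $2\times2$ block index it behaves like multiplication by $\mf O_s$ and the product over the four blocks reassembles into ordinary $2\times2$ matrix products of the $\mk T^{(\cdot,\cdot)}$. The decisive step is to insert the explicit form of $\mf O_s$ from \eqref{rep-on} together with the rank-one structure $\mf O_{i,j}=\delta_{i,0}\delta_{0,j}$: a direct entrywise check gives $\big((\bs\Lambda^T)^{a}\mf O\bs\Lambda^{b}\big)_{i,j}=\delta_{i,a}\delta_{b,j}$, i.e. each summand of $\mf O_s$ is a single matrix unit $E_{a,b}$. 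Sandwiching this matrix unit between $\partial_n\mk T$ on the left and $\mk T$ on the right collapses the internal summations and leaves, for $s\in\mb Z^+$,
\[
\Big(\sum_\gamma (\partial_n\mk T_{\alpha\gamma})\,\mf O_s\,\mk T_{\gamma\beta}\,(\bs\Lambda^T)^{-s}\Big)_{i,j}
=\sum_{l=0}^{s-1}\big(\partial_n\mk T^{(i,s-1-l)}\,\mk T^{(l,j-s)}\big),
\]
which is precisely the sum appearing in \eqref{diff-expand-T-1}; the case $s\in\mb Z^-$ follows verbatim from the second branch of \eqref{rep-on}, producing the overall minus sign and the range $\sum_{l=-1}^{s}$ of \eqref{diff-expand-T-2}.

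I expect the only real obstacle to be the index bookkeeping: keeping the column shift $j\mapsto j-s$ coming from $(\mk A^T)^{-s}$ consistent with the splitting point $s-1-l$ of the matrix unit $(\bs\Lambda^T)^{s-1-l}\mf O\bs\Lambda^{l}$, and making sure the $2\times2$ block multiplication is not spoiled by the fact that $\mk O_s$ carries the same $\mf O_s$ in both diagonal slots. Once the identity $\big((\bs\Lambda^T)^{a}\mf O\bs\Lambda^{b}\big)_{i,j}=\delta_{i,a}\delta_{b,j}$ is in place, everything is a routine extraction of entries, and the positive and negative $s$ cases are distinguished solely by the two branches of \eqref{rep-on}.
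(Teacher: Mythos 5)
Your proposal is correct and follows essentially the same route as the paper: the paper likewise reads off the $(i,j)$ block-entry of \eqref{diff-re-T}, handles the column shift from $(\mk A^T)^{-s}$ in the same way, and evaluates the middle term via the identity $(\mf A\,(\bs\Lambda^T)^{a}\mf O\bs\Lambda^{b}\,\mf B)_{i,j}=(\mf A)_{i,a}(\mf B)_{b,j}$ (quoted from the literature there, re-derived as your matrix-unit observation here) combined with the two branches of \eqref{rep-on}. No gaps.
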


\begin{proof}
For the term $(\partial_{n+s}\mk T)(\mk A^{T})^{-s}$ in \eqref{diff-re-T},
since
\begin{align}
	(\partial_{n+s}\mk T)(\mk A^{T})^{-s}=
	\left[
	\begin{array}{cc}
		\partial_{n+s}\overline{\mf V}(\bs\Lambda^T)^{-s} &    \partial_{n+s}\mf V(\bs\Lambda^T)^{-s}    \\
		\partial_{n+s}\mf U(\bs\Lambda^T)^{-s}  &   \partial_{n+s}\overline{\mf U}(\bs\Lambda^T)^{-s}
	\end{array}
	\right],
\end{align}
we have
\begin{align}\label{partial-n+s-A-s}
	\partial_{n+s}(\mk T(\mk A^{T})^{-s})^{(i,j)}=
	\left[\begin{matrix}
		(\partial_{n+s}\overline{\mf V})_{i,j-s} & (\partial_{n+s}\mf V)_{i,j-s} \\
		(\partial_{n+s}\mf U)_{i,j-s} & (\partial_{n+s}\overline{\mf U})_{i,j-s}
	\end{matrix}\right]
	=\partial_{n+s} \mk T^{(i,j-s)}.
\end{align}

To deal with the term  of $(\partial_n\mk T) \mk O_s \mk T(\mk A^{T})^{-s}$ in \eqref{diff-re-T},
we first recall an identity  (see \cite{FW-thesis}):
\begin{equation}\label{lem-1-eq}
(\bs\Lambda^{m_1}\mf A(\bs\Lambda^T)^{n_1} \mf O \bs\Lambda^{m_2}\mf B(\bs\Lambda^T)^{n_2})_{i,j}
=(\mf A)_{i+m_1,n_1}(\mf B)_{m_2,j+n_2},~~~
i,j,m_1,m_2,n_1,n_2\in\mb Z,
\end{equation}
where $\mf A$ and $\mf B$ be two arbitrary infinite-dimensional matrices.
Applying \eqref{rep-on} we further have
\begin{equation}\label{AonB}
	(\mf A\mf O_n\mf B)_{i,j} = \left\{
	\begin{array}{lll}
		\sum_{l=0}^{n-1}(\mf A)_{i,n-1-l}(\mf B)_{l,j}, && n\in\mb Z^+,\\
		0, && n=0,\\
		-\sum_{l=-1}^{n}(\mf A)_{i,n-1-l}(\mf B)_{l,j}, && n\in\mb Z^-,
	\end{array}\right.
\end{equation}
from which we immediately find
\begin{equation}\label{TOsTA-s}
	((\partial_n\mk T) \mk O_s \mk T(\mk A^{T})^{-s})^{(i,j)}
= \left\{
	\begin{array}{lll}
		\sum_{l=0}^{s-1}(\partial_n\mk T^{(i,s-1-l)}) \mk T^{(l,j-s)}, && s\in\mb Z^+,\\
		-\sum_{l=-1}^{s}(\partial_n\mk T^{(i,s-1-l)}) \mk T^{(l,j-s)}, && s\in\mb Z^-.
	\end{array}\right.
\end{equation}
Finally, we arrive at \eqref{diff-expand-T} from \eqref{diff-re-T}, \eqref{partial-n+s-A-s} and \eqref{TOsTA-s}.

\end{proof}

To have the unreduced ASDYM equation, we introduce two variables
\begin{equation}\label{3.43}
	\bs P\doteq\bs I_2- \mk T^{(0,-1)}, ~~~\bs Q\doteq \mk T^{(0,0)},
\end{equation}
which are $2 \times 2$ matrices.
Take $i=j=0$, $s=1$ in \eqref{diff-expand-T-1},
 we have the following differential recurrence relation:
\begin{equation}\label{diff-recur}
	(\partial_{n+1}\bs P)\bs P^{-1}=-\partial_n\bs Q.
\end{equation}
Note that such a formula was also obtained in \cite{LQYZ-SAPM-2022} from the Cauchy matrix approach,
which serves as a crucial step for deriving the SDYM equation.
Through the compatibility $\partial_m(\partial_n\bs Q)=\partial_n(\partial_m\bs Q)$, one obtains
\begin{equation}\label{Yang-J-SDYM-1}
	\partial_m((\partial_{n+1}\bs P)\bs P^{-1})-\partial_n((\partial_{m+1}\bs P)\bs P^{-1})=0,
\end{equation}
or alternatively,
\begin{equation}\label{Yang-J-SDYM-2}
	\partial_{n+1}(\bs P^{-1}\partial_{m}\bs P)-\partial_{m+1}(\bs P^{-1}\partial_{n}\bs P)=0.
\end{equation}
Note that the equation \eqref{Yang-J-SDYM-2} is exactly in the form of \eqref{ASDYM-unreduce}.
We call \eqref{Yang-J-SDYM-2} the unreduced ASDYM equation.

We end this section with a property of $\bs P$.
We can show that $\bs P\in\mr{SL}(2)$. The determinant of $\bs P$ can be calculated directly as the following,
\begin{align}
		|\bs P|&=|\bs I_2- \mk T^{(0,-1)}|
                   =1-\mr{tr}( \mk T^{(0,-1)})+| \mk T^{(0,-1)}| \nonumber \\
		&=1-(-(\overline{\mf U})_{-1,0}+(\overline{\mf U})_{0,-1})
		+(-(\overline{\mf U})_{-1,0}(\overline{\mf U})_{0,-1}-(\mf V)_{-1,0}(\mf U)_{0,-1}) \nonumber \\
		&=1+(\bs\Lambda^{-1}\overline{\mf U}-\overline{\mf U}(\bs\Lambda^T)^{-1}
		-\bs\Lambda^{-1}\overline{\mf U}\mf O\overline{\mf U}(\bs\Lambda^T)^{-1}
		-\bs\Lambda^{-1}\mf V\mf O\mf U(\bs\Lambda^T)^{-1})_{0,0} \nonumber \\
		&=1+(\overline{\mf U}\bs\Lambda^T-\bs\Lambda\overline{\mf U}
		-\overline{\mf U}\mf O\overline{\mf U}
		-\mf V\mf O\mf U)_{-1,-1} =1. \label{P=1}
\end{align}

\section{Reduction to ASDYM}\label{sec-4}

In this section we reduce the unreduced ASDYM equation \eqref{Yang-J-SDYM-2} to the SU(2) ASDYM equation
in the  Euclidean space $\mathbb{E}$ and the ultrahyperbolic spaces $\mathbb{U}_1$ and $\mathbb{U}_2$.
The reduction is actually a procedure to fulfill the reality conditions in Table \ref{Tab-1}
and the gauge conditions in Table \ref{Tab-2}.

\subsection{Reduction to  the SU(2) ASDYM equation in $\mb E$}\label{sec-4-1}

\subsubsection{Reality condition}\label{sec-4-1-1}

For the SU(2) ASDYM equation in the Euclidean space $\mb E$,
to meet the reality condition in Table \ref{Tab-1},
we take $m=-n-1$ in \eqref{Yang-J-SDYM-2},
and take the following coordinate transformation:
\begin{equation}\label{coor-1}
	z_n\doteq x_n=\xi_n+\mr i\eta_n, ~~~\tilde z_n\doteq (-1)^{n+1}x_{-n}=\xi_n-\mr i\eta_n, ~~~n=1,2,\cdots,
\end{equation}
where $\xi_n$ and $\eta_n$ are real coordinates.
In this setting, equation \eqref{Yang-J-SDYM-2} reduces to
\begin{equation}
\partial_{\tilde z_{n+1}}(\bs P^{-1}\partial_{z_{n+1}}\bs P)+\partial_{z_n}(\bs P^{-1}\partial_{\tilde z_n}\bs P)=0.
\end{equation}

\subsubsection{Gauge condition}\label{sec-4-1-2}

Let us first fix the measures $\mu_i(l)$ in \eqref{3.10}.
Assume $\mathbb D_i$ is the domain with the contour $D_i$ as boundary
and $\{k_j\}_{j=1}^{M}\in \mathbb D_1,~ \{l_j\}_{j=1}^{M}\in \mathbb D_2$.
The  measures $\mu_1(l)$ and $\mu_2(l')$ are taken as the following,
\begin{equation}\label{measure}
\mu_1(l)=\sum^M_{j=1}\delta(l-k_j),~~
\mu_2(l')=\sum^M_{j=1}\delta(l'-l_j),
\end{equation}
where $\delta(l)$ is the formal Dirac $\delta$-function.
With the above $\{k_j, l_j\}$,  the plane wave factors in \eqref{int-eq} for the involved
variables $\{z_n,\tilde z_n\}$)
can be presented in terms of $\{\xi_m, \eta_m\}$ as follows,
\begin{align}
	&\rho_1(k_j)=\exp\big(\zeta_n(k_j)+\lambda_1(k_j) \big),    ~~ &
      \rho_2(l_j)=\exp\big(-\zeta_n(l_j)+\lambda_2(l_j) \big),\\
	&\sigma_1(k_j)=\exp\big(\zeta_n(k_j)+\gamma_1(k_j) \big), ~~ &
      \sigma_2(l_j)=\exp\big(-\zeta_n(l_j)+ \gamma_2(l_j)\big),
\end{align}
where
\begin{equation}
\zeta_n(k_j)=\sum_{m=n}^{n+1}\big((k_j^m+(-1)^{m+1}k_j^{-m})\xi_m
         +\mr i(k_j^m-(-1)^{m+1}k_j^{-m})\eta_m \big).
\end{equation}
With the above setting for measures, the matrices $\mf C_1$ and $\mf C_2$
in \eqref{def-V-C1} take the following explicit form
\begin{equation}\label{C1C2}
\mf C_1= \sum_{j=1}^{M} \rho_1(k_j)\sigma_1(k_j) \mf c_{k_j}\mf c_{k_j}^T, ~~~
\mf C_2= \sum_{j=1}^{M} \rho_2(l_j)\sigma_1(l_j) \mf c_{l_j}\mf c_{l_j}^T.
\end{equation}

To meet the gauge condition in Table \ref{Tab-2},
noting that we have shown $|\bs P|=1$ in \eqref{P=1} in the previous section,
in the following we focusing on achieving $\bs P=\bs P^\dag$.

We introduce the following constraints
\begin{subequations}\label{reduction-1-1-exp}
\begin{equation}\label{reduction-1}
	l_j=-\frac{1}{k_j^*}, ~~~j=1,2,\cdots,M,
\end{equation}
and we choose $\{\lambda_i(l_j), \lambda_i(k_j),\gamma_i(l_j), \gamma_i(k_j)\}$
such that
\begin{equation}\label{reduction-1-exp}
\exp\Big(\lambda_2(l_j)\Big)=\Big(\frac{\exp(\lambda_1(k_j))}{k_j}\Big)^*,~~~
 \exp\Big(\gamma_2(l_j)\Big)=-\Big(\frac{\exp(\gamma_1(k_j))}{k_j}\Big)^*, ~~~j=1,2,\cdots,M,
\end{equation}
\end{subequations}
which give rise to the relations
\begin{equation}\label{reduction-rho-sig}
\rho_2(l_j)=\Big(\frac{\rho_1(k_j)}{k_j}\Big)^*, ~~~
\sigma_2(l_j)=-\Big(\frac{\sigma_1(k_j)}{k_j}\Big)^*,
	 ~~~~j=1,2,\cdots,M.
\end{equation}
In fact, the constraint \eqref{reduction-1} yields
\begin{equation}
(\zeta_n(k_j))^*=-\zeta_n(l_j),
\end{equation}
which, together with \eqref{reduction-1-exp},
gives rise to \eqref{reduction-rho-sig}.

In the following, for the involved arguments, we start to examine their conjugate relations resulted from
the constraint \eqref{reduction-1-1-exp}.
First, let us look at $\mf c_{k_j}$ and $\mf c_{l_j}$.
It apparently follows from \eqref{reduction-1} that
\begin{align}
	(\mf c_{k_j}^T)^*&=(\cdots,k_j^{-2},k_j^{-1},1,k_j,k_j^2,\cdots)^* \notag \\
&=(\cdots,l_j^2,-l_j,1,-l_j^{-1},l_j^{-2},\cdots) \notag \\
&=(\cdots,l_j^2,l_j,1,l_j^{-1},l_j^{-2},\cdots) \,\bs\Xi, \notag \\
&=\mf c_{l_j}^T\bs\Theta\,\bs\Xi, 	\label{ck-trans}
\end{align}
where
\begin{equation*}
	\bs\Xi\doteq\mr{diag}(\cdots,1,-1,1,-1,1,\cdots), ~~~\text{i.e.} ~~~(\bs\Xi)_{i,j}\doteq (-1)^{i+j}\delta_{i,j},
\end{equation*}
and $\bs\Theta$ is a skew form of the identity matrix $\bs I$, i.e.
\begin{equation}
\bs \Theta =((\bs\Theta)_{i,j})_{\infty\times \infty},
~~~ (\bs\Theta)_{i,j}=\delta_{-i,j}.
\label{Theta}
\end{equation}

Before we proceed, let us collect some properties related to $\bs\Xi$ and $\bs\Theta$.
For example, it is easy to see that
\begin{align}
	\bs\Xi^2=\mf I, ~~~
	\bs\Theta^2=\mf I, ~~~\bs\Theta=\bs\Theta^T, ~~~
\bs\Theta\bs \Lambda=\bs \Lambda^T\bs\Theta, ~~~\bs\Lambda\bs\Xi=-\bs\Xi\bs\Lambda.
\end{align}
In addition, 
according to the expansion of $\bs\Omega$ (see \eqref{def-omega-1}\eqref{def-omega-2}) and the above formulae,
we have
\begin{subequations}\label{4.12}
\begin{align}
& \bs\Omega\bs\Xi= -\bs\Xi\bs\Omega, \label{4.12a}\\
& \bs\Theta\bs\Omega\bs\Theta=-\bs\Lambda^T\bs\Omega\bs\Lambda. \label{4.12b}
\end{align}
\end{subequations}
In fact, using \eqref{def-omega-1} and by direct calculation we have
\begin{align*}
	\bs\Omega\bs\Xi&=\sum_{i=0}^{\infty}(\bs\Lambda^T)^{-i-1}\mf O\bs\Lambda^i\bs\Xi
	=\sum_{i=0}^{\infty}(-1)^i(\bs\Lambda^T)^{-i-1}\mf O\bs\Xi\bs\Lambda^i \notag \\
	&=-\bs\Xi\sum_{i=0}^{\infty}(\bs\Lambda^T)^{-i-1}\mf O\bs\Lambda^i=-\bs\Xi\bs\Omega,
\end{align*}
and
\begin{align*}
	\bs\Theta\bs\Omega\bs\Theta &=\sum_{i=0}^{\infty}\bs\Theta(\bs\Lambda^T)^{-i-1}
     \mf O\bs\Lambda^i\bs\Theta \notag \\
	&=\sum_{i=0}^{\infty}(\bs\Lambda^T)^{i+1}\mf O\bs\Lambda^{-i}
	=\bs\Lambda^T\Big(\sum_{i=0}^{\infty}(\bs\Lambda^T)^{i}\mf O\bs\Lambda^{-i-1}\Big)\bs\Lambda
	=-\bs\Lambda^T\bs\Omega\bs\Lambda.
\end{align*}

Based on the above preparations, from the expression \eqref{C1C2}, the conjugate of $\mf C_1$ is then given by
\begin{align*}
\mf C_1 ^*
&=\Big(\sum_{j=1}^{N}\frac{\rho_1(k_j)}{k_j}k_j\mf c_{k_j}\mf c_{k_j}^Tk_j\frac{\sigma_1(k_j)}{k_j}\Big)^* \\
&=\sum_{j=1}^{N}\Big(\frac{\rho_1(k_j)}{k_j}\Big)^* \bs\Lambda \mf c_{k_j}^*
      (\mf c_{k_j}^T)^* \bs\Lambda^T \Big(\frac{\sigma_1(k_j)}{k_j}\Big)^*.
\end{align*}
Then, using \eqref{ck-trans} we have
\begin{align}
\mf C_1 ^*
	&=-\bs\Lambda\sum_{j=1}^N \rho_2(l_j)\bs\Xi\bs\Theta\mf c_{l_j}
        \mf c_{l_j}^T\bs\Theta\bs\Xi\sigma_2(l_j)\bs\Lambda^T
	=-\bs\Xi\bs\Lambda\bs\Theta\mf C_{2}\bs\Theta\bs\Lambda^T\bs\Xi,
\end{align}
where the anti-commutative relation $\bs\Lambda\bs\Xi=-\bs\Xi\bs\Lambda$ has been utilized.
Similarly, we have
\begin{equation}
	\mf C_2^*=-\bs\Xi\bs\Lambda\bs\Theta\mf C_{1}\bs\Lambda^T\bs\Theta\bs\Xi.
\end{equation}

Next, we come to look at the conjugate relation of $\mf U$ and $\mf V$.
Recalling the formula of $\mf U$ in \eqref{U-rep},
a direct substitution yields
\begin{align*}
	\mf U^*&=((\mf C_1^*)^{-1}-\bs\Omega\mf C_2^*\bs\Omega)^{-1} \\
	&=(-(\bs\Xi\bs\Lambda\bs\Theta\mf C_2\bs\Theta\bs\Lambda^T\bs\Xi)^{-1}+
	\bs\Omega\bs\Xi\bs\Lambda\bs\Theta\mf C_1\bs\Theta\bs\Lambda^T\bs\Xi\bs\Omega)^{-1}.
\end{align*}
Noticing that $\bs\Omega\bs\Xi=- \bs\Xi \bs\Omega$ and $\bs\Xi^{-1}=\bs \Xi$, we find
\begin{align*}
\mf U^*&= -\bs\Xi(\bs\Lambda\bs\Theta\mf C_2 \bs\Theta\bs\Lambda^T
                     -\bs\Omega\bs\Lambda\bs\Theta\mf C_1\bs\Theta\bs\Lambda^T\bs\Omega )^{-1}\bs\Xi \\
	&= -\bs\Xi(\bs\Lambda\bs\Theta\mf C_2 \bs\Theta\bs\Lambda^T
         -\bs\Lambda \bs\Lambda^T\bs\Omega\bs\Lambda\bs\Theta
           \mf C_1\bs\Theta\bs\Lambda^T\bs\Omega\bs\Lambda \bs\Lambda^T)^{-1}\bs\Xi,
\end{align*}
where we also utilized the relation $\bs\Lambda \bs\Lambda^T=\bs I$
which is true in the infinitely dimensional case.
Then, using \eqref{4.12b}, $\bs \Theta^2=\bs I$ and $\bs\Theta\bs \Lambda=\bs \Lambda^T\bs\Theta$
successively, we have
\begin{align*}
	\mf U^*&= -\bs\Xi(\bs\Lambda\bs\Theta\mf C_2 \bs\Theta\bs\Lambda^T
	-\bs\Lambda \bs \Theta \bs\Omega\mf C_1\bs\Omega\bs\Theta\bs\Lambda^T)^{-1}\bs\Xi \\
	&=-\bs\Xi(\bs\Theta\bs\Lambda^T)^{-1}(\mf C_2^{-1}
	- \bs\Omega\mf C_1\bs\Omega)^{-1}(\bs\Lambda\bs\Theta)^{-1}\bs\Xi  \\
	&=-\bs\Xi\bs\Lambda\bs\Theta\mf V\bs\Theta\bs\Lambda^{T}\bs\Xi,
\end{align*}
which indicates
\begin{align}
	(\mf U^*)_{i,j}=-(\bs\Xi\bs\Lambda\bs\Theta\mf V\bs\Theta\bs\Lambda^T\bs\Xi)_{i,j}
	=(-1)^{i+j+1}(\bs\Theta\mf V\bs\Theta)_{i+1,j+1}=(-1)^{i+j+1}(\mf V)_{-i-1,-j-1}.
\end{align}
As a result,
\begin{equation}
	(\mf U^*)_{0,-1}=(\mf V)_{-1,0}=(\mf V)_{0,-1},
\end{equation}
as $\mf V$ is symmetric.
The conjugate of $\overline{\mf U}$ can be calculated as well, which is
\begin{align*}
	(\overline{\mf U})^*&=-((\mf C_2^*\bs\Omega\mf C_1^*)^{-1}-\bs\Omega)^{-1} \\
	&=-((\bs\Xi\bs\Lambda\bs\Theta\mf C_1\bs\Theta\bs\Lambda^T\bs\Xi \bs\Omega
	\bs\Xi\bs\Lambda\bs\Theta\mf C_2\bs\Theta\bs\Lambda^T\bs\Xi)^{-1}-\bs\Omega)^{-1} \\
	&=((\bs\Xi\bs\Lambda\bs\Theta\mf C_1\bs\Theta\bs\Lambda^T\bs\Omega
	\bs\Lambda\bs\Theta\mf C_2\bs\Theta\bs\Lambda^T\bs\Xi)^{-1}-\bs\Xi\bs\Lambda\bs\Lambda^T
       \bs\Omega\bs\Lambda\bs\Lambda^T\bs\Xi)^{-1} \\
	&=-(\bs\Xi\bs\Lambda(\bs\Theta\mf C_1\bs\Omega
       \mf C_2\bs\Theta)^{-1}\bs\Lambda^T\bs\Xi-\bs\Xi\bs\Lambda\bs\Theta
       \bs\Omega\bs\Theta\bs\Lambda^T\bs\Xi)^{-1} \\
	&=-\bs\Xi\bs\Lambda\bs\Theta(({\mf C_1}{\bs\Omega}
        {\mf C_2})^{-1}-{\bs\Omega})^{-1}\bs\Theta\bs\Lambda^T\bs\Xi \\
	&=\bs\Xi\bs\Lambda\bs\Theta\overline{\mf V}\bs\Theta\bs\Lambda^T\bs\Xi.
\end{align*}
Hence we have
\begin{equation}
	(\overline{\mf U}^*)_{i,j}=(-1)^{i+j}(\overline{\mf V})_{-i-1,-j-1},
\end{equation}
as well as
\begin{equation}
	(\overline{\mf U}^*)_{0,-1}=-(\overline{\mf V})_{-1,0}=(\overline{\mf U})_{0,-1},~~~~
	(\overline{\mf V}^*)_{0,-1}=-(\overline{\mf U})_{-1,0}=(\overline{\mf V})_{0,-1}.
\end{equation}

Recalling the $2\times 2$ matrix $\bs P=\bs I_2-\mk T^{(0,-1)}$ that we have defined in \eqref{3.43},
utilizing the conjugate relations we have found above, we arrive at
\begin{align*}
	\bs P^*=
	\begin{pmatrix}
		1-(\overline{\mf V}^*)_{0,-1} & -(\mf V^*)_{0,-1} \\
		-(\mf U^*)_{0,-1} & 1-(\overline{\mf U}^*)_{0,-1}
	\end{pmatrix}
	=
	\begin{pmatrix}
		1-(\overline{\mf V})_{0,-1} & -(\mf U)_{0,-1} \\
		-(\mf V)_{0,-1} & 1-(\overline{\mf U})_{0,-1}
	\end{pmatrix}
	=\bs P^T,
\end{align*}
which concludes that $\bs P$ is a Hermitian matrix.

\subsection{Reduction to the SU(2) ASDYM equation in $\mb U_1$}\label{sec-4-2}

The reduction should give rise to the results that meet both the reality condition and the gauge condition.
In the unltrahyperbolic space $\mb U_1$, to agree with the reality condition (see Table \ref{Tab-1}),
we take  $m=-n-1$ in the unreduced ASDYM equation \eqref{Yang-J-SDYM-2}
and take the following coordinate transformation:
\begin{equation}\label{trans-U1}
	z_n\doteq x_n=\xi_n+\mr i\eta_n, ~~~
	\tilde z_n\doteq -x_{-n}=\xi_n-\mr i\eta_n, ~~~ 	n=1,2,\cdots,
\end{equation}
where $\xi_n$ and $\eta_n$ are real. The coordinate transformation leads \eqref{Yang-J-SDYM-2} to
\begin{equation}
\partial_{\tilde z_{n+1}}(\bs P^{-1}\partial_{z_{n+1}}\bs P)-\partial_{z_n}(\bs P^{-1}\partial_{\tilde z_n}\bs P)=0.
\end{equation}

Under the setting \eqref{measure} for the measures $\mu_1(l)$ and $\mu_2(l')$,
corresponding to the plane wave factors in \eqref{PWF-3.2} and the coordinate transformation \eqref{trans-U1},
we have the following plane wave factors:
\begin{subequations}\label{PWF-U1}
\begin{align}
	&\rho_1(k_j)=\exp\big(\bar\zeta_n(k_j)+\lambda_1(k_j) \big),    ~~ &
      \rho_2(l_j)=\exp\big(-\bar\zeta_n(l_j)+\lambda_2(l_j) \big),\\
	&\sigma_1(k_j)=\exp\big(\bar\zeta_n(k_j)+\gamma_1(k_j) \big), ~~ &
      \sigma_2(l_j)=\exp\big(-\bar\zeta_n(l_j)+ \gamma_2(l_j)\big),
\end{align}
where
\begin{equation}
\bar\zeta_n(k_j)=\sum_{m=n}^{n+1}\big((k_j^m-k_j^{-m})\xi_m+\mr i(k_j^m+k_j^{-m})\eta_m \big);
\end{equation}
\end{subequations}
the matrices $\mf C_1$ and $\mf C_2$ take the   form \eqref{C1C2}
but with the above plane wave factors.

To meet the gauge condition  $\bs P=\bs P^\dag$, we introduce constraints
\begin{subequations}\label{reduction-1-2-exp}
\begin{equation}\label{reduction-2}
	l_j=\frac{1}{k_j^*}, ~~~j=1,2,\cdots,M,
\end{equation}
and choose $\{\lambda_i(l_j), \lambda_i(k_j),\gamma_i(l_j), \gamma_i(k_j)\}$,  such that
\begin{equation}\label{reduction-2-exp}
\exp\Big(\lambda_2(l_j)\Big)=\Big(\frac{\exp(\lambda_1(k_j))}{k_j}\Big)^*,~~~
 \exp\Big(\gamma_2(l_j)\Big)= \Big(\frac{\exp(\gamma_1(k_j))}{k_j}\Big)^*, ~~~j=1,2,\cdots,M,
\end{equation}
\end{subequations}
which give rise to the relation
$(\bar\zeta_n(k_j))^*=-\bar\zeta_n(l_j)$
and then
\begin{equation}\label{reduction-rho-sig-2}
\rho_2(l_j)=\Big(\frac{\rho_1(k_j)}{k_j}\Big)^*, ~~~
\sigma_2(l_j)=\Big(\frac{\sigma_1(k_j)}{k_j}\Big)^*,
	 ~~~~j=1,2,\cdots,M.
\end{equation}

Similar to the Euclidean space case in the previous subsection,
we can successfully examine  conjugate relations of the involved arguments.
For $\mf c_{k_j}$ and $\mf c_{l_j}$, we have
\begin{align*}
(\mf c_{k_j}^T)^* =(\cdots,k_j^{-2},k_j^{-1},1,k_j,k_j^2,\cdots)^*
=(\cdots,l_j^2,l_j,1,l_j^{-1},l_j^{-2},\cdots)={\mf c}_{l_j}^T\bs\Theta,
\end{align*}
where $\bs\Theta$ is defined in \eqref{Theta}.
For $\mf C_1$ and $\mf C_2$, we have
\begin{align*}
	\mf C_1^*&=\Big(\sum_{j=1}^{M} \rho_1(k_j)\sigma_1(k_j) \mf c_{k_j}\mf c_{k_j}^T\Big)^* \\
	&=\Big(\sum_{j=1}^{M} \frac{\rho_1(k_j)}{k_j} k_j\mf c_{k_j}
         \mf c_{k_j}^T k_j\frac{\sigma_1(k_j)}{k_j}\Big)^* \\
	&=\bs\Lambda \sum_{j=1}^{M} \rho_2(l_j)\bs\Theta{\mf c}_{l_j}
         {\mf c}_{l_j}^T \bs\Theta \sigma_2(l_j)\bs\Lambda^T \\
	&=\bs\Lambda\bs\Theta{\mf C}_{2}\bs\Theta\bs\Lambda^T,
\end{align*}
and
\begin{align*}
	\mf C_2^*=\bs\Lambda\bs\Theta{\mf C}_{1}\bs\Theta\bs\Lambda^T.
\end{align*}
For $\mf U, \mf V$, $\overline{\mf U}$ and $\overline{\mf V}$, we have
\begin{align*}
	\mf U^*&=((\mf C_1^*)^{-1}-\bs\Omega\mf C_2^*\bs\Omega)^{-1} \\
	&=((\bs\Lambda\bs\Theta{\mf C_2}\bs\Theta\bs\Lambda^T)^{-1}
         -\bs\Omega\bs\Lambda\bs\Theta{\mf C_1}\bs\Theta\bs\Lambda^T\bs\Omega)^{-1} \\
	&=(\bs\Lambda\bs\Theta{\mf C_2}^{-1}\bs\Theta\bs\Lambda^T
         - \bs\Lambda \bs\Lambda^T\bs\Omega\bs\Lambda \bs\Theta
         {\mf C_1}\bs\Theta \bs\Lambda^T\bs\Omega\bs\Lambda \bs\Lambda^T)^{-1} \\
	&=\bs\Lambda\bs\Theta({\mf C_2}^{-1}
-{\bs\Omega}{\mf C_1}\bs\Omega)^{-1}\bs\Theta\bs\Lambda^T  \\
	&=\bs\Lambda\bs\Theta{\mf V}\bs\Theta\bs\Lambda^T,
\end{align*}
and
\begin{align*}
	\overline{\mf U} ^*&=-((\mf C_2^*\bs\Omega\mf C_1^*)^{-1}-\bs\Omega)^{-1} \\
	&=((\bs\Lambda\bs\Theta{\mf C_1}\bs\Theta(-\bs\Lambda^T\bs\Omega \bs\Lambda)\bs\Theta
       {\mf C_2}\bs\Theta\bs\Lambda^T)^{-1}
       -\bs\Lambda(-\bs\Lambda^T\bs\Omega\bs\Lambda)\bs\Lambda^T)^{-1} \\
	&=\bs\Lambda\bs\Theta(({\mf C_1}{\bs\Omega}{\mf C_2})^{-1}-{\bs\Omega})^{-1}\bs\Theta\bs\Lambda^T \\
	&=-\bs\Lambda\bs\Theta\overline{\mf V}\bs\Theta\bs\Lambda^T.
\end{align*}
Hence we have
\bsb
\begin{align}
	&(\mf U^*)_{i,j}=(\bs\Lambda\bs\Theta{\mf V}\bs\Theta\bs\Lambda^T)_{i,j}
       =(\bs\Theta{\mf V}\bs\Theta)_{i+1,j+1}=(\mf V)_{-i-1,-j-1}, \\
	&(\overline{\mf U}^*)_{i,j}=-(\bs\Lambda\bs\Theta\overline{\mf V}\bs\Theta\bs\Lambda^T)_{i,j}
       =-(\bs\Theta\overline{\mf V}\bs\Theta)_{i+1,j+1}=-(\overline{\mf V})_{-i-1,-j-1}.
\end{align}
\esb
We are led to
\begin{align*}
	\bs P^*=
	\begin{pmatrix}
		1-(\overline{\mf V}^*)_{0,-1} & -(\mf V^*)_{0,-1} \\
		-(\mf U^*)_{0,-1} & 1-(\overline{\mf U}^*)_{0,-1}
	\end{pmatrix}
	=
	\begin{pmatrix}
		1-(\overline{\mf V})_{0,-1} & -(\mf U)_{0,-1} \\
		-(\mf V)_{0,-1} & 1-(\overline{\mf U})_{0,-1}
	\end{pmatrix}
	=\bs P^T,
\end{align*}
which shows $\bs P$ is a Hermitian matrix, and therefore $\bs P$ satisfies the SU(2) gauge condition in
the space $\mathbb U_1$.

\subsection{Reduction to the SU(2) ASDYM equation in $\mb U_2$}\label{sec-4-3}

The reality condition in  $\mb U_2$ is fulfilled by the coordinate transformation:
\begin{equation}
	\tilde{z}\doteq-\mr ix_{n+1}=\xi_{n+1}, ~~~
	z\doteq-\mr ix_{m}=\xi_m, ~~~
	\tilde{w}\doteq-\mr ix_{m+1}=\xi_{m+1}, ~~~
	w\doteq-\mr ix_{n}=\xi_n,
\end{equation}
where $\xi_n,\xi_{n+1},\xi_m$ and $\xi_{m+1}$ are real.
With such a setting, the unreduced ASDYM equation \eqref{Yang-J-SDYM-2} reduces to
the ASDYM equation
\begin{equation}
	\partial_{\tilde z}(\bs P^{-1}\partial_{z}\bs P)-\partial_{\tilde w}(\bs P^{-1}\partial_{w}\bs P)=0.
\end{equation}

Following the settings in \eqref{measure} for the measures $\mu_1(l)$ and $\mu_2(l')$, we have the following
plane wave factors:
\begin{subequations}\label{PWF-U2}
\begin{align}
	&\rho_1(k_j)=\exp\big(\hat\zeta_{n,m}(k_j)+\lambda_1(k_j) \big),    ~~ &
      \rho_2(l_j)=\exp\big(-\hat\zeta_{n,m}(l_j)+\lambda_2(l_j) \big),\\
	&\sigma_1(k_j)=\exp\big(\hat\zeta_{n,m}(k_j)+\gamma_1(k_j) \big), ~~ &
      \sigma_2(l_j)=\exp\big(-\hat\zeta_{n,m}(l_j)+ \gamma_2(l_j)\big),
\end{align}
with
\begin{equation}
\hat\zeta_{n,m}(k_j)=\mr ik_j^m\xi_m+\mr ik_j^{m+1}\xi_{m+1}+\mr ik_j^n\xi_n+\mr ik_j^{n+1}\xi_{n+1},
\end{equation}
\end{subequations}
and the matrices $\mf C_1$ and $\mf C_2$ take the  form \eqref{C1C2}
but with the above plane wave factors.

To meet the gauge condition in Table \ref{Tab-2}, i.e. $\bs P^\dagger\bs P=\bs I_2$,
we introduce the constraints
\begin{equation}
	l_j=k_j^*, ~~~\exp(\lambda_2(l_j))=\exp(\lambda_1(k_j))^*, ~~~\exp(\mu_2(l_j))=-\exp(\mu_1(k_j))^*,
\end{equation}
which give rise to
$(\hat\zeta_{n,m}(k_j))^*=-\hat\zeta_{n,m}(l_j)$ and
\begin{equation}
	\rho_2(l_j)=(\rho_1(k_j))^*, ~~~\sigma_2(l_j)=-(\sigma_1(k_j))^*.
\end{equation}
Hence we have
\begin{align*}
	(\mf c_{k_j}^T)^* =(\cdots,k_j^{-2},k_j^{-1},1,k_j,k_j^2,\cdots)^*
=(\cdots,l_j^{-1},l_j^{-1},1,l_j,l_j^{2},\cdots)=\mf c_{l_j}^T,
\end{align*}
which, then, yields the following relation for $\mf C_1$ and $\mf C_2$:
\begin{align*}
	\mf C_1^* =\Big(\sum_{j=1}^{M} \rho_1(k_j)\sigma_1(k_j) \mf c_{k_j}\mf c_{k_j}^T\Big)^*
	 =-\sum_{j=1}^{M} \rho_2(l_j)\sigma_2(l_j) \mf c_{l_j}\mf c_{l_j}^T
	=-{\mf C}_{2}.
\end{align*}
Thus we have
\bsb\label{conjugate-U2}
\begin{align}
	\mf U^*&=((\mf C_1^*)^{-1}-\bs\Omega\mf C_2^*\bs\Omega)^{-1}
                  =-((\mf C_2)^{-1}-\bs\Omega\mf C_1\bs\Omega)^{-1}=-\mf V, \\
	\overline{\mf U}^*&=-((\mf C_2^*\bs\Omega\mf C_1^*)^{-1}-\bs\Omega)^{-1}
                 =-((\mf C_1\bs\Omega\mf C_2)^{-1}-\bs\Omega)^{-1}=\overline{\mf V}.
\end{align}
\esb

Next, considering $n=1$ in \eqref{recur-1}, we have
\begin{equation}
	\mk T^{(-1,0)}- \mk T^{(0,-1)}=\mk T^{(0,-1)} \mk T^{(0,-1)}.
\end{equation}
This gives rise to a relation
\begin{equation}\label{QP=I2}
	\bs W\bs P=\bs I_2,
\end{equation}
where $\bs P$ is defined as in \eqref{3.43} and
\[\bs W\doteq\bs I_2+ \mk T^{(-1,0)}.\]
Then, utilizing  \eqref{conjugate-U2} we can calculate the conjugate of $\bs P$:
\begin{align*}
	\bs P^*&=\begin{pmatrix}
		1-(\overline{\mf V}^*)_{0,-1} & -(\mf V^*)_{0,-1} \\
		-(\mf U^*)_{0,-1} & 1-(\overline{\mf U}^*)_{0,-1}
	\end{pmatrix}
	=\begin{pmatrix}
		1-(\overline{\mf U})_{0,-1} & (\mf U)_{0,-1} \\
		(\mf V)_{0,-1} & 1-(\overline{\mf V})_{0,-1}
	\end{pmatrix} \\
	&=\begin{pmatrix}
		1+(\overline{\mf V})_{-1,0} & (\mf U)_{-1,0} \\
		(\mf V)_{-1,0} & 1+(\overline{\mf U})_{-1,0}
	\end{pmatrix}
	=\bs I_2+\big(\mk T^{\mr T}\big)^{(-1,0)}=\bs W^T.
\end{align*}
This result, together with \eqref{QP=I2}, indicates
\begin{equation}
	\bs P^\dagger\bs P=\bs I_2.
\end{equation}
Hence, the solution $\bs P$ is unitary.

\section{Concluding remarks}\label{sec-5}

In this paper, we have established the DL scheme for the SU(2) ASDYM equation.
The main idea is to construct an unreduced ASDYM equation (i.e. \eqref{Yang-J-SDYM-2})
and then reduce it the ASDYM equation in various spaces.
We started from the linear integral equation set \eqref{int-eq} with separable measure \eqref{3.3}
and general plane wave factors \eqref{PWF-3.2}.
Then we introduced infinite-dimensional matrices $\{\mf U,\mf V,\overline{\mf U}, \overline{\mf V}\}$
as well as $\mk T$.
In this scheme we were able to derive their evolution relations and ``difference'' relations (see Proposition \ref{prop-4})
and differential recurrence relation (see Theorem \ref{Th-1}).
Then, we could extract from $\mk T$ the relations for the $2\times 2$ matrix $\mk T^{(i,j)}$
and derive the differential relation $(\partial_{n+1}\bs P)\bs P^{-1}=-\partial_n\bs Q$, i.e. \eqref{diff-recur}.
The compatibility $\partial_m(\partial_n\bs Q)=\partial_n(\partial_m\bs Q)$ gives rise to the
unreduced ASDYM equation \eqref{Yang-J-SDYM-2}.
The reduction includes choosing coordinate transformations to meet the reality conditions given in Table \ref{Tab-1}
and imposing constraints on $k_j$ and $l_j$ to meet the gauge conditions given in Table \ref{Tab-2}.
By reduction we have successfully got the SU(2) ASDYM equations in the Euclidean space $\mathbb{E}$
and two ultrahyperbolic spaces $\mathbb{U}_1$ and $\mathbb{U}_2$.
In this paper, we did not present more explicit forms for $\bs P$.
One can derive them along the lines of the technique used in \cite{ZZN-SAM-2012,NSZ-2023}.

Our DL scheme is actually based on the AKNS system.
Note that the (A)SDYM equation can also be formulated from the (matrix) KP system \cite{LSS-2023}.
It is possible to develop a different DL scheme using the connection with the KP system.
This will be one of topics to be investigated.

The DL scheme will provide a new platform to study the (A)SDYM equation.
One can extended the scheme to the SU($N$) SDYM equations.
Note that so far the solutions obtained from the Cauchy matrix approach \cite{LQYZ-SAPM-2022,LSS-2023}
do not provide  instantons (cf.\cite{Atiyah-1978,Corrigan-1977,Wilczek-1977}) with physical significance.
However, it is possible to introduce new plane wave factors in the DL scheme so that instantons can be derived.
In addition, the DL scheme has been extended from the usual soliton case to the elliptic soliton case \cite{NSZ-2023},
It would be interesting to develop an elliptic DL scheme for the SDYM equation.
Finally, we note that the DL scheme provides a powerful tool in integrable discretization.
It is expected to have a discrete version of the SDYM equation,
which might be more mathematically significant in discretizing the geometry structure of the SDYM equation.

\vskip 20pt
\subsection*{Acknowledgments}
This project is supported by the NSFC grant (No. 12271334, 12326428).
The authors are grateful to Prof. M. Hamanaka for discussion.

\vskip 20pt

\small{

}


\begin{thebibliography}{99}
	
\bibitem{Jimbo-1982} M. Jimbo, M.D. Kruskal, T. Miwa,
	 Painlev\'{e} test for the self-dual Yang-Mills equation,
	 Phys. Lett. A, 92 (1982) 59-60.
	
\bibitem{Ward-1984} R.S. Ward,
	The Painlev\'{e} property for the self-dual gauge-field equations,
	Phys. Lett. A, 102 (1984) 279-282.
	
\bibitem{Atiyah-1978} M.F. Atiyah, N.J. Hitchin, V.G. Drinfeld, Yu.I. Manin,
	Construction of instantons,
	Phys. Lett A, 65 (1978) 185-187.
	
\bibitem{Corrigan-1977} E. Corrigan, D.B. Fairlie,
	Scalar field theory and exact solutions to a classical SU(2) gauge theory,
	Phys. Lett. B, 67 (1977) 69-71.
	
\bibitem{Wilczek-1977} F. Wilczek,
	 Geometry and interactions of instantons,
     In: {Quark Confinement and Field Theory},
     Eds. D.R. Stump, D.H. Weingarten,
     Wiley, New York, 1977, pp211-219.
	
\bibitem{Yang-1977} C.N. Yang,
	Condition of self-duality for SU(2) gauge fields on Euclidean four-dimensional space,
	Phys. Rev. Lett., 38 (1977) 1377-1379.
	
	
\bibitem{BFNY-1978} Y. Brihaye, D.B. Fairlie, J. Nuyts, R.G. Yates,
	Properties of the self dual equations for an SU(n) gauge theory,
	J. Math. Phys., 19 (1978) 2528-2532.
	
\bibitem{Po-1980} K. Pohlmeyer,
	On the Lagrangian theory of anti-self-dual fields in four-dimensional Euclidean space,
	Commun. Math. Phys., 72 (1980) 37-47.
	
\bibitem{deVega-1988} H.J. de Vega,
	Non-linear multi-plane wave solutions of self-dual Yang-Mills theory,
	Commun. Math. Phys., 116 (1988) 659-674.
	
\bibitem{Getmanov-1990} B.S. Getmanov,
	$N$-monopole soliton-type solutions of the self-duality equations for an SU(2) gauge theory in Minkowski space-time,
	Phys. Lett. B, 244 (1990) 455-457.
	
\bibitem{Chau-1993} L.L. Chau, I. Yamanaka,
	Quantization of the self-dual Yang-Mills system:
    Exchange algebras and local quantum group in four-dimensional quantum field theories,
	Phys. Rev. Lett., 70 (1993) 1916-1919.
	
\bibitem{Chau-1994} L.L. Chau, J.C. Shaw, H.C. Yen,
	$N$-soliton-type solutions of SU(2) self-duality Yang-Mills equations in various spaces and
    their B\"{a}cklund transformations,
	J. Phys. A: Math. Gen., 27 (1994) 7131-7138.
		
\bibitem{Mason-2005} L. Mason,
	Global anti-self-dual Yang-Mills fields in split signature and their scattering,
	J. reine angew. Math., 597 (2006) 105-133.
	
\bibitem{Hamanaka-2020} M. Hamanaka, S.C. Huang,
	New soliton solutions of anti-self-dual Yang-Mills equations,
	J. High Energ. Phys., 2020 (2020) 101 (18pp).

\bibitem{Huang-2021} S.C. Huang,
	On Soliton Solutions of the Anti-Self-Dual Yang-Mills Equations from the Perspective of Integrable Systems
    (PhD thesis), Nagoya University, 2021,
	arXiv:2112.10702 [hep-th].
	
\bibitem{Hamanaka-2022} M. Hamanaka, S.C. Huang,
	Multi-soliton dynamics of anti-self-dual gauge fields,
	J. High Energ. Phys., 2022 (2022) 039 (19pp).
	
\bibitem{Hamanaka-2023} M. Hamanaka, S.C. Huang, H. Kanno,
	Solitons in open $N=2$ string theory,
	Prog. Theor. Exp. Phys., 2023 (2023) 043B03 (47pp).
	

	
\bibitem{Belavin-1978} A.A. Belavin, V.E. Zakharov,
	Yang-Mills equations as inverse scattering problem,
	Phys Lett. B, 73 (1978) 53-57.
	
\bibitem{Ueno-1982} K. Ueno, Y, Nakamura,
	Transformation theory for anti-self-dual equations and the Riemann-Hilbert problem,
	Phys Lett. B, 109 (1982) 273-278.
	
\bibitem{Takasaki-1984} K. Takasaki,
	A new approach to the self-dual Yang-Mills equations,
	Commun. Math. Phys., 94 (1984) 35-59.
	
\bibitem{Sasa-1998} N. Sasa, Y. Ohta, J. Matsukidaira,
	Bilinear form approach to the self-dual Yang-Mills equations and integrable systems in (2+1)-dimension,
	J. Phys. Soc. Japan, 67 (1998) 83-86.
	
\bibitem{Nimmo-2000} J.J.C. Nimmo, C.R. Gilson, Y.Ohta,
	Applications of Darboux transformations to the self-dual Yang-Mills equations.,
	Theor. Math. Phys., 122 (2000) 239-246.
	
\bibitem{LQYZ-SAPM-2022} S.S. Li, C.Z. Qu, X.X. Yi, D.J. Zhang,
	Cauchy matrix approach to the SU(2) self-dual Yang--Mills equation,
	Stud. Appl. Math., 148  (2022) 1703-1721.
	
\bibitem{Zhao-2018} S.L. Zhao,
	The Sylvester equation and integrable equations: The Ablowitz-Kaup-Newell-Segur system,
	Rep. Math. Phys., 82 (2018) 241-263.
	
\bibitem{LSS-2023} S.S. Li, C.Z. Qu, D.J. Zhang,
	Solutions to the SU($N$) self-dual Yang-Mills equation,
	Physica D, 453 (2023) 133828 (17pp).
	

	
\bibitem{Fokas-1981} A.S. Fokas, M.J. Ablowitz,
	Linearization of the Korteweg-de Vries and Painlev\'e II equations,
	Phys. Rev. Lett., 47 (1981) 1096-110.
	
\bibitem{Nijhoff-1983} F.W. Nijhoff, G.R.W. Quispel, J. Van der linden, H.W. Capel,
	On some linear integral equations generating solutions of nonlinear partial differential equations,
	Physica A, 119 (1983) 102-142.
	
\bibitem{Nijhoff-1983-2} F.W. Nijhoff, G.R.W. Quispel, J. Van der linden, H.W. Capel,
	The derivative nonlinear Schr\"{o}dinger equation and the massive thirring model,
	Phys. Lett. A, 93 (1983) 455-458.

\bibitem{83-NQC} F.W. Nijhoff, G.R.W. Quispel, H.W. Capel,
        Direct linearisation of difference-difference equations,
        Phys. Lett. A 97 (1983) 125-128.

\bibitem{QNCL-84} G.R.W. Quispel, F.W. Nijhoff, H.W. Capel, J. van ver Linden,
        Linear integral equations and nonlinear difference-difference equations,
        Physica A, 125 (1984) 344-380.

\bibitem{85-NCW} F.W.  Nijhoff, H.W. Capel, G.L. Wiersma,
        Integrable lattice systems in two and three dimensions,
        Ed. R. Martini, in: Geometric Aspects of the Einstein Equations and Integrable Systems,
        Lect. Not. Phys. 239 (1985) 263-302.

\bibitem{ZZN-SAM-2012} D.J. Zhang, S.L. Zhao, F.W. Nijhoff,
        Direct linearization of an extended lattice BSQ system,
        {Stud. Appl. Math.}, {129} (2012) 220-48.
	
\bibitem{FW-2017} W. Fu, F.W. Nijhoff,
	Direct linearizing transform for three-dimensional discrete integrable systems: the lattice AKP, BKP and CKP equations,
	Proc. R. Soc. A, 473 (2017) 20160915 (22pp).
	
\bibitem{FW-2020} W. Fu,
	Direct linearization approach to discrete integrable systems associated with $\mb Z_{\mathcal N}$ graded Lax pairs,
	Proc. R. Soc. A, 476 (2020) 20200036 (17pp).
	
	
\bibitem{FW-thesis} W. Fu,
	Direct Linearisation of Discrete and Continuous Integrable Systems: The KP Hierarchy and its Reductions (PhD thesis),
	University of Leeds, 2018.
	
\bibitem{FW-Nijhoff-2022} W. Fu, F.W. Nijhoff,
	On a coupled Kadomtsev-Petviashvili system associated with an elliptic curve,
	Stud. Appl. Math., 149 (2022) 1086-1122.

\bibitem{NSZ-2023} F.W. Nijhoff, Y.Y. Sun, D.J. Zhang,
     Elliptic solutions of Boussinesq type lattice equations and the elliptic N-th root of unity,
     Commun. Math. Phys., 399  (2023) 599-650.
	
\bibitem{MW-book} L.J. Mason, N.M.J. Woodhouse,
	Integrability, Self-Duality, and Twistor Theory,
	Oxford University Press, Oxford, New York, 1996.

	
	
	
\end{thebibliography}
\end{document}